\documentclass[conference,compsoc]{IEEEtran}

\usepackage[utf8]{inputenc}
\usepackage[T1]{fontenc}
\usepackage{xcolor}
\usepackage{pifont}
\usepackage{amssymb}

\usepackage{amsmath,amsfonts}
\usepackage{float}
\usepackage{amsthm}
\usepackage{stfloats}
\allowdisplaybreaks
\usepackage{graphics,graphicx, color}
\usepackage{bm} 
\usepackage[linesnumbered,ruled,vlined]{algorithm2e}
\usepackage{threeparttable}

\usepackage{algorithmic} 
\usepackage{tikz}
\usepackage{booktabs}
\usetikzlibrary{calc}
\usepackage[hidelinks]{hyperref}
\usepackage[capitalize,nameinlink]{cleveref}
\usepackage{multirow}
\usepackage{subcaption}

\usepackage{tikz}
\usetikzlibrary{matrix,positioning,decorations.pathreplacing}
\usepackage{booktabs}
\usetikzlibrary{automata,positioning,arrows}

\usepackage{tikz}
\usetikzlibrary{shapes.geometric, arrows.meta, positioning, calc, backgrounds, fit}

\tikzset{
    server/.style={cylinder, draw, shape border rotate=90, minimum height=1.4cm, minimum width=1.2cm, aspect=0.2, fill=gray!10},
    client/.style={rectangle, draw, rounded corners, minimum height=1.2cm, minimum width=1.5cm, fill=blue!5},
    arr/.style={-{Stealth[scale=1.0]}, thick},
    attack_arr/.style={-{Stealth[scale=1.0]}, thick, red}, %
    note/.style={font=\scriptsize, align=center}
}

\newtheorem{thm}{Theorem}
\numberwithin{thm}{section}

\newtheorem{rmk}[thm]{Remark}
\newtheorem{cor}[thm]{Corollary}

\Crefname{defi}{Definition}{Definition}
\crefname{lem}{Lemma}{Lemma}
\crefname{prop}{Proposition}{Proposition}
\crefname{rmk}{Remark}{Remarks}
\Crefname{rmk}{Remark}{Remarks}
\crefname{cor}{Corollary}{Corollaries}
\Crefname{cor}{Corollary}{Corollaries}

\newcommand{\va}{{\bm a}}
\newcommand{\vb}{{\bm b}}

\newcommand{\ve}{{\bm e}}

\newcommand{\vm}{{\bm m}}

\newcommand{\vp}{{\bm p}}

\newcommand{\vr}{{\bm r}}
\newcommand{\vs}{{\bm s}}
\newcommand{\vt}{{\bm t}}
\newcommand{\vu}{{\bm u}}

\newcommand{\Z}{{\mathbb Z}}

\newcommand{\calO}{\mathcal{O}}

\newcommand{\calR}{\mathcal{R}}
\newcommand{\calS}{\mathcal{S}}
\newcommand{\calT}{\mathcal{T}}

\newcommand{\tsf}[1]{\textsf{{#1}}}

\newcommand{\pk}{\ensuremath{\mathsf{pk}}}
\newcommand{\sk}{\ensuremath{\mathsf{sk}}}
\newcommand{\evk}{\ensuremath{\mathsf{evk}}}
\newcommand{\ct}{\ensuremath{\mathsf{ct}}}

\SetKwInOut{Parameter}{Parameter}

\DeclareMathOperator{\poly}{poly}

\newcommand{\round}[1]{\ensuremath{\left\lfloor #1 \right\rceil}}

\newcommand{\set}[1]{\ensuremath{\{ {#1} \}}}

\begin{document}
\title{Reaction Attack on CKKS Deployments\\ with Asymptotically Optimal Query Complexity{\IEEEauthorrefmark{1}}}

\author{ 
\IEEEauthorblockN{Song Bian} 
\IEEEauthorblockA{Beihang University \\sbian@buaa.edu.cn}
\and 
\IEEEauthorblockN{Hyeongmin Choe} 
\IEEEauthorblockA{University of Luxembourg \\hyeongmin.choe@uni.lu}
\and 
\IEEEauthorblockN{Jiseung Kim}
\IEEEauthorblockA{Jeonbuk National University \\jiseungkim@jbnu.ac.kr}
\and 
\IEEEauthorblockN{Changmin Lee\IEEEauthorrefmark{2}} 
\IEEEauthorblockA{Korea University \\changminlee@korea.ac.kr}
}

\maketitle

\renewcommand{\thefootnote}{\fnsymbol{footnote}}
\footnotetext[1]{{Authors are listed in alphabetical order; see \protect\url{https://www.ams.org/profession/leaders/culture/JointResearchandItsPublicationfinal.pdf}.}}
\footnotetext[2]{Corresponding author.}
\renewcommand{\thefootnote}{\arabic{footnote}}

\begin{abstract}
We study a protocol-level reaction leakage in CKKS deployments where public coefficient-to-slot (C2S) evaluation keys are combined with a client's 1-bit validity response. Under this deployment surface, a malicious evaluator can recover a ternary secret key using asymptotically optimal $O(\mathcal{H})$ reaction queries, where $\mathcal{H}$ is the entropy of the secret key.

The attack does not break RLWE or CKKS IND-CPA security. Instead, it exploits the interaction between client-side validity reactions and public homomorphic linear transformations. A shift of a public encryption key yields a ciphertext whose plaintext is a function of the secret key; the public ${\sf C2S}$ layer then repacks this information to construct a ciphertext whose validity depends strictly on whether a targeted secret coefficient matches a guessed value. Furthermore, each submitted query is independently randomized with a fresh encryption of zero, so that each resulting malicious ciphertext appears entirely distinct, thereby allowing the attack to simply evade trivial detection mechanisms on the client side. 

We implemented the full end-to-end attack in Lattigo v6.2.0 and recovered the full secret key across four parameter sets drawn from standard CKKS presets. 
For uniform ternary secrets, the attack uses
$O(N)$ queries and is within a factor $2/\log_2 3 < 2$ of the oracle-only entropy lower bound. For sparse ternary secrets, a tailored variant further reduces the number of queries in practice. Our results show that observable validity reactions form a practical key-recovery channel in CKKS deployments exposing this surface, even when the underlying cryptographic assumptions remain intact.

\end{abstract}

\section{Introduction}\label{Sec: Introduction}

Fully Homomorphic Encryption (FHE)~\cite{Gen09} enables computations to be performed directly on encrypted data. In a typical outsourcing scenario, a client encrypts its input, sends the ciphertext to an untrusted server, and later decrypts the returned ciphertext to obtain
the result of the computation, without revealing the underlying data to
the server.

This capability makes FHE a natural foundation for secure outsourced
computation, including privacy-preserving machine learning and other
cloud-based data-processing tasks~\cite{AB24, CGL+20, KHB+21, KJL+22, LKL+22, LJ21, SBL22, SZL+18}. 
Although FHE implementations have grown increasingly practical, the security of these real-world deployments is no longer determined only by the hardness of the underlying cryptographic assumptions. 
It additionally hinges on how the resulting ciphertexts are returned, checked, and handled by the client.

\medskip\noindent\textbf{Attacks on Secure Outsourced Computation.}
Attacks on secure outsourced computation based on FHE can be broadly divided into two classes. The first class targets the cryptographic primitives themselves, independent of the specific application. This includes the cryptanalysis of the underlying Ring-Learning-With-Errors (RLWE) assumptions, a foundational hardness assumption widely used in lattice-based cryptography, or exploiting the gap in theoretical security notions, most notably, the fact that standard FHE schemes guarantee Indistinguishability under Chosen-Plaintext Attack (IND-CPA) but are inherently vulnerable to Indistinguishability under Chosen-Ciphertext Attack (IND-CCA) attacks.

In contrast, this paper focuses on the second class of attacks, which emerge from the deployment and protocol layers surrounding homomorphic encryption. 
While much of the applied FHE literature assumes an honest-but-curious server, this assumption is often overly optimistic for practical deployment, where a server might deviate from the protocol. 
In such scenarios, a malicious evaluator can actively manipulate the computation, submitting adversarially generated ciphertexts to observe how the client reacts--whether they accept, reject, retry, or leak information through decrypted outputs.

\medskip\noindent\textbf{Reaction Attacks.}
This specific vulnerability is exploited by reaction attacks~\cite{Ble98, CCCM22, CCCM24, CGG16, DDA13, HGS99, HSJ09, LMSV12}, which turn a client's visible behavior--such as acceptance (accept or reject), rejection (retry or not), or error handling--into a one-bit decryption oracle. 
Although each individual response leaks only a single bit of information, a sequence of adaptively chosen queries can amplify this leakage into substantial information about the secret key. 

To illustrate this setting, depicted in Figure~\ref{fig:reaction_attack_overview}, consider a client ${\sf C}$ that outsources a computation $\Phi$ to a server ${\sf S}$, typically a cloud service provider. 
The client encrypts a message $m$ under the Cheon--Kim--Kim--Song (CKKS) scheme~\cite{CKKS17}, an FHE scheme specialized for real and complex number vectors, sends the ciphertext $\mathsf{Enc}(m)$ to ${\sf S}$, and expects to receive an evaluated ciphertext decrypting to an approximation of $\Phi(m)$. In an idealized model, the client simply decrypts the returned ciphertext to obtain the result. 
In practice, however, a client \emph{seldom} accepts the decrypted output \emph{unconditionally}. 
In practice, deployments (whether or not they use FHE) often perform application-layer admissibility checks (e.g., range or magnitude tests) before consuming a result. This kind of input validation is a well-established defensive-programming technique~\cite{owasp-input-validation}, and reaction attacks beginning with Bleichenbacher~\cite{Ble98} and continuing thereafter have repeatedly demonstrated that the resulting accept/reject behavior can leak secret information.
For instance, a machine-learning pipeline might verify whether the decoded class logits, the pre-softmax scores, fall within a valid range, or a protocol might check for arithmetic overflow. 
Whenever the client's application-level reaction triggered by a failed check (e.g., an implicit retry or an explicit abort) becomes observable to the server, this check effectively transforms into a reaction oracle.

The crucial point is that such an oracle does not reveal plaintexts, nor does it require the adversary to break the underlying RLWE assumption or the IND-CPA security of CKKS. 
It reveals strictly whether a submitted ciphertext satisfies the client's validity predicate. 
In this scenario, a malicious server ${\sf S}$ can submit a manipulated ciphertext $(\va, \vb)$ and observes only whether the client-side predicate accepts or rejects the decoded output. 
For example, if ${\sf C}$ proceeds without requesting a re-computation, the adversary learns that the predicate was accepted; if ${\sf C}$ retries or raises an error, the adversary learns it failed. 
Thus, the validity reaction acts as a one-bit decision oracle on the decoded slot magnitudes. 
Yet, as we demonstrate, this single bit is sufficient to fully recover the CKKS secret key under a natural deployment surface.

Even such a restricted oracle can leak the secret key. 
To see this, let $\vs$ be the client's secret key, where a valid ciphertext $(\va, \vb)$ satisfies the standard decryption relation $\vb - \va\vs = \vm + \mathbf{e} \bmod q$ for a plaintext $\vm$, a small encryption noise $\mathbf{e}$, and a ciphertext modulus $q$.
Suppose a malicious server ${\sf S}$ attempts to guess the secret key by sampling a candidate ring element $\vu$ and a sufficiently large public perturbation $\vt$. 
The server then sends the modified ciphertext $(\va' = \va + \vt, \vb' = \vb + \vu\vt)$ to ${\sf C}$.

If the guess is correct (i.e., $\vu = \vs$), the decryption relation remains unchanged since $\vb' - \va'\vs = \ve \bmod q$, meaning the modified ciphertext stays within the admissible range. 
If the guess is incorrect (i.e., $\vu \neq \vs$), the decryption residual contains the additional term $(\vt(\vu - \vs) + \ve \bmod q)$, which exceeds the admissibility threshold with high probability due to the magnitude of $\vt$. 
A reaction oracle can therefore distinguish whether a guessed ring element matches the secret key.

While this simple example demonstrates why validity reactions leak information about $\vs$, it is exponentially inefficient, equivalent to a brute-force search over a secret space (e.g., of size $3^N$ for typical uniform ternary secrets, where $N$ is the ring dimension). 

Crucially, reaction attacks are not captured by standard security notions for FHE, such as IND-CPA, IND-CCA, or the intermediate $\text{IND-CPA}^{\text{D}}$~\cite{LM21, LMSS22}, which allows restricted decryption queries.
The adversary never obtains the plaintexts themselves, nor does it violate the hardness of the underlying RLWE problem. 
It merely observes the binary validity outcomes induced by the client's application-level correctness checks.

\begin{figure}[t!]
\centering
\begin{tikzpicture}[node distance=4.5cm, font=\sffamily\small]

    \node[server] (server) {Server $\mathsf{S}$};
    \node[client, right=3.5cm of server, yshift=0.1cm] (client) {Client $\mathsf{C}$};

    \node[above=0.05cm of server, font=\bfseries] {\small \sf pk};
    \node[above=0.15cm of client, font=\bfseries] {\small \sf sk};

    \draw[arr] ($(server.east)+(3.4,0.65)$) -- node[above, note] {1. $\mathsf{ct}_\mathrm{in} = \mathsf{Enc}(\mathbf{m})$} ($(server.east)+(0.1,0.65)$);

    \draw[arr, gray, dashed] ($(server.east)+(0.1,0.05)$) -- node[above, note] {2. $\mathsf{ct}_\mathrm{out} = \mathsf{Eval}_{\sf pk}(\Phi, \mathsf{ct}_\mathrm{in})$} ($(server.east)+(3.4,0.05)$);

    \draw[attack_arr] ($(server.east)+(0.1,-0.5)$) -- node[above, note] {2. Replaced by $\mathsf{ct}_\mathrm{adv}$} ($(server.east)+(3.4,-0.5)$);

    \node[note, draw, rounded corners, fill=white, below=0.4cm of client] (val) {3. Deploy.-lvl. check\\ (modeled as $\calO_\mathrm{react}$)};
    \draw[arr, gray] (client.south) -- (val.north);

    \node[note, draw, red, thick, fill=red!5] at (val -| server) (recovery) {5. {\sf sk} recovery};
    
    \draw[attack_arr] ($(val.west)+(-0.1,0)$) -- node[above, note] {4. Reaction leakage} node[below, note] {1 bit (valid or not)} ($(recovery.east)+(0.1,0)$);

\end{tikzpicture}
\caption{Conceptual Sketch of the Reaction Attack Scenario.}
\label{fig:reaction_attack_overview}
\end{figure}

\subsection{Limitations of Prior Reaction Attacks}\label{subsec:prior attack}

To the best of our knowledge, only a few works have explicitly
considered reaction attacks on FHE schemes within the secure outsourced
computation model~\cite{CCCM22,CCCM24,KK25,LMSV12,ZPS12}. These works have
played an important role in showing that client-visible validity
responses can become a key-recovery channel. However, they have
primarily focused on exact FHE schemes, such as Gentry-type integer
schemes~\cite{Gen09,ZPS12}, TFHE~\cite{CGGI16},
FHEW~\cite{DM15}, BGV~\cite{BGV14}, and BFV~\cite{Bra12,FV12};
extending them to the approximate CKKS setting is the gap we address.

While crucial in establishing the threat of reaction-based key recovery, these prior attacks were designed for exact schemes and do not transfer directly to RLWE-based approximate FHE such as CKKS~\cite{CKKS17}, where the noisy decoding blurs the deterministic error separation they exploit. The practical difference most relevant to our setting is how queries are formed, discussed below. 

\medskip\noindent\textbf{Detectability via Ciphertext Reuse.}
Earlier reaction attacks~\cite{CCCM22,CCCM24,LMSV12} heavily rely on repeatedly submitting minor variations of a fixed target ciphertext to the client. 
For instance, in attacks on TFHE/BGV, the adversary fixes a component of the ciphertext $(\va,\vb)$ and issues numerous queries on nearby variants $(\va,\vb+\delta)$. 
By incrementally adjusting the perturbation $\delta$ via a binary-search-style procedure, the adversary aims to pinpoint the exact threshold that triggers a change between a valid and invalid response, thereby exposing the underlying noise. 
This methodology creates linkable correlations across queries~\cite{KK25}: a client that inspects its access logs may notice many requests in which one ciphertext component is held fixed while the other is systematically varied. Our attack instead re-randomizes every query (\Cref{sec: threat-model}), so it does not rely on such a reuse pattern.

\medskip
These fundamental limitations raise a concrete and challenging question:
\begin{center}
\textit{
Can single-bit validity responses enable efficient\\
secret-key recovery in FHEs using only\\
format-compliant ciphertexts without ciphertext reuse?
}
\end{center}

\subsection{Our Result}\label{sec:this-work}

\begin{table*}[h!]
\centering
\small
\begin{tabular}{c|cccc||cccc}
\toprule
& \multicolumn{4}{c||}{\textbf{Uniform Ternary Secret (\Cref{alg:on-attack})}} & \multicolumn{4}{c}{\textbf{Sparse Ternary Secret (\Cref{alg:algo_sparse})}} \\
\midrule
Ref. & $(\log N, \log q_0, \log \Delta)$ & Avg. $\#Q$ & $\#Q/H(\vs)$ & E2E Time & $(\log N, h)$ & Avg. $\#Q$ & $\#Q/H(\vs)$ & LLL Time \\
\midrule
\multirow{4}{*}{Lattigo \cite{lattigo}\protect\footnotemark} 
 & $(12, 35, 30)$ & $6{,}832$   & $1.052$ & $30$\,s & $(12, 32)$    & $338$     & $1.134$ & $0.4$\,s \\
 & $(12, 37, 32)$ & $6{,}826$   & $1.051$ & $30$\,s & $(12, 64)$    & $598$     & $1.117$ & $3.3$\,s \\
 & $(14, 45, 40)$ & $27{,}313$  & $1.052$ & $506$\,s & $(12, 128)$  & $1{,}079$ & $1.142$ & $~~1{,}521$\,s~\,\,\, \\
 & $(16, 55, 50)$ & $109{,}232$ & $1.052$ & $8{,}529$\,s & $(14, 128)$ & $1{,}469$ & $1.221$ & $~~1{,}521$\,s$^{**}$ \\
\cline{1-5}
Guidelines~\cite{BCC+24} 
 & $(16, 60, 58)$ & $109{,}219$ & $1.051$ & $68{,}408$\,s$^*$ & $(16, 128)$   & $1{,}801$ & $1.234$ & $~~1{,}521$\,s$^{**}$ \\
\bottomrule
\end{tabular}
\caption{
Overview of Empirical End-to-End Attack (achieving a 100\% success rate across all evaluated cases). 
The baseline attack (Algorithm 1) reports average query-count that tracks closely the theoretical $N+h \approx 5N/3$ for uniform ternary secrets. End-to-end attack wall time is reported in seconds.
The variant targeting sparse ternary secrets (Algorithm 2) of fixed Hamming weights ($h \ll N$) compresses the required oracle queries; then the LLL-based sign-recovery completes the secret key recovery, whose cost is reported in seconds. 
$\#Q$ denotes the number of queries, whose theoretical lower bound is Shannon entropy $H(\vs)$.
More details can be found in \Cref{sec:experiment}.
}
\label{tab: number of queries}
\end{table*}
\footnotetext{
Parameters for Lattigo are described at \url{https://pkg.go.dev/github.com/ldsec/lattigo/v2/ckks}.
}

We show that a single-bit validity reaction suffices to recover a CKKS secret key~\cite{CHKKS18, CHKKS19, CKKS17} without reusing a fixed target ciphertext.

\medskip\noindent\textbf{Contributions.}
Our contributions are four-fold. 
\emph{(i) Optimality:} full ternary key recovery in $N+|\mathcal{S}|\le 2N$ \emph{deterministic} queries, within a factor $1.05$--$1.26$ of the entropy lower bound (\Cref{thm: lower-bound}), and $O(h\log(N/h))$ for sparse secrets.
\emph{(ii) A single-bit oracle:} each query returns only one validity bit per ciphertext, not per slot.
\emph{(iii) Reuse-resistant queries:} every query is freshly re-randomized and can even be added to an honest circuit output.
\emph{(iv) Approximate arithmetic:} the first realization for CKKS, via a public C2S localization and a slot-margin noise analysis (\Cref{thm: main}).

We state our result informally in \Cref{informal result}, with formal versions in \Cref{thm: main} (uniform ternary), \Cref{thm: sparse} (sparse ternary), and the matching lower bound in \Cref{thm: lower-bound}.

\medskip\noindent\textbf{Public Material Assumed by the Attack.}
The attack requires only standard public evaluation material: the public encryption key $\pk$
and the public rotation keys needed to homomorphically evaluate a Coefficient-to-Slot (C2S) transform. These rotation keys are standard CKKS evaluation keys implementing ring automorphisms, frequently referred to as Galois keys. While widely deployed for CKKS bootstrapping~\cite{BMTH21,CHKKS18,HS18}, the attack does not strictly require a typical bootstrapping setup. For instance, the attack does not depend on a fixed, pre-shipped key bundle: a reduced key set or on-the-fly key generation~\cite{CKP26,LLKN23} similarly suffices. Furthermore, C2S can be evaluated even in non-bootstrappable settings using keys at lower moduli, as in homomorphic matrix computations~\cite{BCH26,Park25} where only a few Galois keys are provided.\footnote{as few as $2$--$3$ keys are provided in~\cite{BCH26,Park25}, including a single-slot rotation and conjugation, sufficient to generate the full automorphism group needed for C2S.}

\begin{thm}[Informal Result]
\label{informal result}
Let $N$ be the ring dimension. 
For the CKKS scheme with a ternary secret key $\vs$, public encryption key $\pk$, and public rotation keys for the Coefficient-to-Slot (C2S) transform, the secret key $\vs$ can be fully recovered in time polynomial in $N$ using at most $2N$ reaction-oracle queries. 

Furthermore, if the secret key $\vs$ is restricted to a sparse ternary distribution with $h$ non-zero coefficients, 
then the secret key $\vs$ can be fully recovered in time polynomial in $N$ and $h$, using $O(h\log(N/h))$ reaction-oracle queries.
\end{thm}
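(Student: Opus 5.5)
The plan is to turn the key-dependent shift of the public key into a per-coefficient test, and then search coefficients in an order that matches the entropy count.

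\textbf{Step 1 (key-dependent plaintext).} The public key $\pk=(\va,\vb)$ with $\vb=\va\vs+\ve \bmod q$ gives, after a public shift, a ciphertext whose plaintext is a function of $\vs$. Concretely, take $(\va',\vb')=(\va\cdot X^{0}+\vt,\ \vb)$, or more simply the pair $(-1,0)$ re-encoded. Its decryption is $\vs$, up to a scaling, plus small noise. I would fix a ciphertext $\ct_{\vs}$ that encrypts $\Delta\cdot \vs$ (or $\lambda\vs$ for a chosen scale $\lambda$) at the modulus $q_0$. This ciphertext is obtained purely from public data, for instance as $(0,\lambda)$ read as ``$\lambda\vs$'' under the relation $\vb-\va\vs$ with $\va=-\lambda$, $\vb=0$.

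\textbf{Step 2 (localization via C2S).} Apply the public C2S linear map, which moves coefficients to slots. The slot $j$ then holds $\lambda s_j$ for $j<N/2$, and the remaining coefficients go to the imaginary parts or to a second ciphertext. To test the hypothesis $s_j=g$ for $g\in\{-1,0,1\}$, subtract the public plaintext $\lambda g$ in slot $j$. Then mask all other slots by multiplying by the plaintext indicator $e_j$ (a plaintext-ciphertext product with rescale). The result has slot $j$ equal to $\lambda(s_j-g)+\text{noise}$ and all other slots equal to noise. I choose $\lambda$ above the client's admissibility threshold $B$ while the noise stays below $B$. The validity bit is then exactly $[s_j=g]$. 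Finally, add a fresh encryption of zero under $\pk$ so the query is re-randomized without changing validity.

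The main obstacle is the noise analysis. The C2S evaluation with key-switching, the rescaling, and the masking all contribute errors, and these must be bounded in every slot below $B$ while $\lambda$ still fits under $q_0$. I would bound the canonical-embedding norm of each error term using the standard CKKS heuristics (key-switch error $O(N\sigma)$, rescale error $O(\sqrt{N h})$), giving a slot margin $\lambda/2 > B > $ noise. This is the content of \Cref{thm: main}, which I would prove first.

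\textbf{Step 3 (query count, uniform case).} For each $j$, first ask ``$s_j=0$?''. If the answer is no, ask ``$s_j=1$?'', which settles the sign. The total number of queries is $N+|\mathcal S|\le 2N$, where $\mathcal S$ is the support, so for uniform ternary secrets the expected count is about $5N/3$.

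\textbf{Step 4 (sparse case).} Instead of masking a single slot, mask a block $I$ of slots and use a validity predicate that accepts only if every slot in the block is small. The query ``$s_I=0$?'' then costs one bit. Group-testing style adaptive binary splitting finds the $h$ nonzero positions in $O(h\log(N/h))$ queries. The signs can then be found either with one query each ($h$ more queries), or, as the paper does, through LLL on the RLWE instance $\vb=\va\vs+\ve$ with known support. The latter is polynomial time since only $h$ unknown signs remain in a low-dimensional lattice.

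The optimality claim relative to $H(\vs)$ follows from \Cref{thm: lower-bound}.
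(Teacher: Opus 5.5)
Your proposal follows the paper's proof. Your noiseless ciphertext $(-\lambda,0)$ differs from the paper's shift $(\va-\Delta,\vb)$ only by adding $\pk$, which is an encryption of zero. The remaining steps are the paper's: C2S localization, one-hot masking with fresh $\mathsf{Enc}_{\pk}(0)$ re-randomization, $N+|\calS|$ zero-then-sign queries, and bisection group testing followed by LLL. Your alternative of $h$ extra sign queries is a harmless variant that makes polynomial-time recovery unconditional, rather than relying on LLL actually finding the sign vector.

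Two small fixes. First, form the shifted ciphertext at a modulus above $q_0$, since C2S consumes levels. Second, use the real/imaginary split into two ciphertexts, as the normalized $\mathsf{C2S}_\Delta$ does. If $s_{j+N/2}$ stays in the imaginary part, each slot test depends jointly on $s_j$ and $s_{j+N/2}$.
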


We note that in the sparse secret setting, which yields a Shannon entropy of $H(\vs) = O(h\log (N/h))$ for $h \ll N$.\footnote{Since $|\text{key space}| = \binom{N}{h} \cdot 2^h \le \left({2eN}/{h}\right)^h$ holds for $1 \le h \le N$.} 
As a result, our query complexity asymptotically attains the information-theoretic lower bound, yielding optimal query complexity. 

\medskip\noindent\textbf{Overview of the Attack.}
We implemented the full end-to-end attack in Lattigo~v6~\cite{lattigo}
across various standard parameter regimes ranging from $\log_2 N = 12$ to $16$, successfully recovering the secret key with a 100\% success rate in all evaluated cases (see \Cref{sec:experiment}). 
These ring dimensions and level-zero moduli (also known as bottom moduli $q_0$) strictly match standard CKKS parameters. 
\Cref{tab: number of queries} summarizes the theoretical query-count bounds and the empirical query-counts, with the  across the parameter sets, demonstrating full consistency with recent security recommendations~\cite{BCC+24} and commonly used library presets~\cite{lattigo}.

Crucially, our evaluation demonstrates that the required oracle calls scale strictly with the structural entropy of the secret key distribution. 
For standard uniform ternary secrets, full key recovery is consistently achieved within $1.7N$ queries: exactly $N$ queries are utilized to recover the support $\mathcal{S} = \{j: s_j \neq 0\}$, and $h = |\mathcal{S}|$ queries to determine the exact signs. 

More devastatingly, under standard sparse-ternary-secret regimes (e.g., fixed-weights of $h \in \{32, 64, 128, 192, 256\}$), our attack employs a bisection-based intersection search to perfectly recover the support in only $O(h\log_2 (N/h))$ queries. 
Once the support is identified, the attack exploits the underlying LWE relation to map the remaining sign-recovery task into a low-dimensional LWE instance. 
This allows us to bypass further oracle queries entirely and solve the instance via standard lattice reduction (LLL), in a matter of seconds to minutes (e.g., $0.4$\,seconds for $h=32$, and $3.3$\,seconds for $h=64$). According to the lattice estimator~\cite{APS15}, this dimensional collapse reduces the theoretical hardness to only $33$--$43$ bits for $h=32$--$1024$, which is well below the standard 128-bit security.

\medskip\noindent\textbf{Artifact.}
Our artifacts are available at \url{https://github.com/JiseungKim90/reaction-attack}.
The released artifact comprises the attack codes, 
and the parameter-sweep scripts, built under Go~1.24 against Lattigo~\cite{lattigo}. 

\medskip\noindent\textbf{Scope of the Attack.}
Our attack is protocol-level rather than primitive-level, exploiting correctness validation leakage without breaking the underlying RLWE or CKKS IND-CPA security. It applies to deployments that expose validity-dependent responses to the evaluator. Because queries are independently randomized with fresh encryptions, the attack inherently evades simple ciphertext-reuse filters. We detail the exact scope in \Cref{sec: threat-model}, and discuss countermeasures in \Cref{sec: discussion}.

\medskip\noindent\textbf{Distinguishing from Recent IND-CPA$^\text{D}$ Attacks.}
A closely related line of work studies oracle-based key recovery in the IND-CPA$^\text{D}$ setting~\cite{BCD+24, CCPS24, GNS+24, LM21}. We stress that our attack is \emph{not} an IND-CPA$^\text{D}$ attack; rather, the two settings are \emph{incomparable}, differing along two axes.

First, the \emph{adversarial class} differs. IND-CPA$^\text{D}$ security constrains the adversary to \emph{legitimate} ciphertexts: outputs of the scheme's own encryption or honest evaluation, tracked together with their underlying plaintexts. Our attacker is a malicious evaluator that submits an \emph{arbitrary}, adversarially crafted ciphertext lying outside this legitimate evaluation tree; a deployment that enforces such legitimacy is therefore \emph{not} in our scope.

Second, the \emph{oracle interface} differs. The IND-CPA$^\text{D}$ attack on bootstrapping in~\cite{CCPS24} also uses only the correctness information, similar to our setting, but for \emph{every message slot}. On the other hand, our attack observes only a single validity bit per query, making it harder to pinpoint which slot caused the failure.

Consequently, the two lines of work address different deployment surfaces and leakage channels, and should be viewed as distinct and potentially cumulative threat models.

\subsection{Related Work}
\medskip\noindent\textbf{IND-CVA Security.}
Not specifically in the FHE context, however, Bleichenbacher~\cite{Ble98} and Hall, Goldberg, and Schneier~\cite{HGS99} introduced classical reaction attacks on public-key systems. Hu, Sun, and Jiang~\cite{HSJ09} later formalized such attacks through indistinguishability under a chosen-ciphertext verification attack (IND-CVA), where the adversary receives a one-bit validity response for adaptively submitted ciphertexts.
IND-CVA comes in a non-adaptive and an adaptive form; our attack is adaptive, since its sign-recovery phase (and the sparse-secret bisection) chooses each query as a function of previous validity bits.
This abstraction captures the kind of verification signal used in our attack.
Loftus et al.~\cite{LMSV12} showed that FHE can remain vulnerable to IND-CVA attacks even when it satisfies IND-CCA1 security; that is, achieving the stronger CCA1 notion does not by itself close the verification channel. Das, Dutta, and Adhikari~\cite{DDA13} further positioned IND-CVA between IND-CPA and IND-CCA security and related it to models such as IND-CCA1.5.

\medskip\noindent\textbf{Comparison with Prior Reaction Attacks on FHE.}
We summarize how our attack relates to prior reaction and reaction-style key-recovery attacks on FHE.
Zhang, Plantard, and Susilo~\cite{ZPS12} gave, to our knowledge, the earliest reaction attack on FHE-based outsourced computation, recovering the secret key of Gentry's scheme and of integer-based FHE. Their attack is enabled by the bootstrapping key of Gentry's framework, which publishes an encryption of each secret-key bit $\{\mathsf{Enc}(s_i)\}$; a validity reaction on the $i$-th such ciphertext then directly leaks $s_i$, and they observe that its influence on the failure rate is small enough to remain inconspicuous. Their reaction signal is probabilistic (a timing distribution $D_0$ vs.\ $D_1$) and their success requires repeated queries on a fixed message to build confidence.
Our setting is fundamentally different: CKKS (like BGV/BFV) publishes no encryption of the secret key, so the coordinate-wise leakage that~\cite{ZPS12} obtains from the bootstrapping key is simply unavailable, an obstruction also noted by~\cite[p.~462]{CCCM24}. We instead \emph{construct} a secret-bearing ciphertext from the public encryption key via a shift and a public C2S transform, and each query yields a deterministic validity bit rather than a statistical one.

Loftus et al.~\cite{LMSV12} exhibit an IND-CVA attack on a CCA1-secure SHE scheme, using coefficient-wise verification queries on exact ciphertexts.

Chaturvedi et al.~\cite{CCCM22, CCCM24} recover keys of the exact schemes TFHE/FHEW and BGV/BFV by repeatedly submitting perturbed variants of a target ciphertext and using the resulting decryption failures as the reaction signal; the BGV/BFV variants inject the perturbation after a modulus switch so that it affects the error rather than the message~\cite[p.~470]{CCCM24}. Their recovery is probabilistic and, by their own classification, non-adaptive (IND-CVA1): each secret coefficient is recovered by guessing its value and the error sign and submitting several samples per guess, so the query count carries a repetition factor (e.g., $\sim\!2rN$ for their $r=11$), whereas ours is deterministic, adaptive, and query-optimal ($N+|\mathcal{S}|$, or $O(h\log(N/h))$ for sparse secrets).

We note that the \emph{existence} of such a break is unsurprising: Walter~\cite{Walter22} observes that a ciphertext-verification oracle yields a CCA2-style attack that is inherently devastating to any IND-CPA-secure LWE encryption, and that for exact schemes it can be mounted without side-channel assistance. Our contribution is therefore not the existence of a reaction break but its optimal query complexity and its realization for the \emph{approximate} CKKS scheme, where the per-coordinate noise-extraction underlying such exact-scheme attacks does not directly apply. 

Kumazaki and Kaji~\cite{KK25} study the minimum number of oracle queries for TFHE reaction attacks and give a nearly optimal scheme; their optimality is specific to the LWE/TFHE reaction model, whereas our optimality is measured against the Shannon entropy of the ternary CKKS secret.

Finally, the ciphertext-crafting step behind our public-key shift is algebraically close to the plaintext-recovery technique of Chenal and Tang~\cite{ChenalTang14}; the distinction is that~\cite{ChenalTang14} assumes a decryption oracle returning exact plaintext bits on exact schemes, whereas we operate under a strictly weaker aggregate reaction bit, localize CKKS coefficients through C2S in approximate arithmetic, and re-randomize each query.

\medskip
\noindent\textbf{Other Security Analyses of FHE Schemes.}
Most security analyses of CKKS-like schemes consider different oracle surfaces. One line of work studies the IND-CPA$^\text{D}$ setting~\cite{CRBB24, CCPS24, GNS+24, LM21, LMSS22}, in which the adversary is restricted to \emph{legitimate} ciphertexts (honest encryptions and evaluations tracked with their plaintexts). Some of these attacks, notably~\cite{CCPS24}, already rely on a correctness bit rather than on recovered plaintexts; the distinction from our work is thus not the amount of leakage per query but the adversarial class (legitimate-only versus an arbitrarily crafted ciphertext) and the per-slot versus aggregate granularity (\Cref{sec:this-work}). Hwang et al.~\cite{HMSS25} study a differentially private homomorphic-evaluation notion for protecting the sender's message against a malicious receiver in CKKS-based protocols.

Another line of work studies implementation-level leakage, including side-channel and fault-injection attacks~\cite{AKP+22, CGG16, GB25, MNR+25}. These attacks rely on physical leakage, implementation behavior, or fault effects, whereas our adversary uses only public evaluation material and one observable decoded-slot validity bit.

This is the first polynomial-query reaction attack that recovers the full CKKS secret key from a decoded-slot validity oracle in deployments exposing a public C2S transform. The attack therefore fills a gap between IND-CPA$^\text{D}$ analyses, which grant the adversary decryption only on legitimate, evaluation-consistent ciphertexts, and implementation attacks, which require physical or implementation-specific leakage.

\section{Preliminaries}\label{Sec: Preliminaries}
Let $\Z$ be the set of all integers. 
For any integer $q \geq 2$, we define $\Z_q = [-q/2, q/2) \cap \Z$, where $z \bmod q$ denotes the reduction of an integer $z$ into this interval.
Unless specified otherwise, all logarithms are assumed to be base $2$.

For any set $S$, sampling $x$ from a uniform distribution over $S$ is denoted by $x \gets S$. When $x$ is drawn from a distribution $D$, it is written as $x \gets D$. For any positive integer $n$, the set $\set{0, \ldots, n-1}$ is denoted by $[n]$. 
For any $u \in \Z$, the rounding operation is denoted by $\round{u}$. If $\vr$ is a polynomial or a coefficient vector, $\round{\vr}$ is applied coefficient-wise. For a subset $S \subseteq \{0, \cdots, N-1\}$, let $\va|_S \in \mathbb{R}^{|S|}$ denote the restriction of a vector $\va \in \mathbb{R}^N$ to the coordinates in $S$. For a matrix $A \in \mathbb{R}^{M \times N}$, $\mathbf{A}|_S \in \mathbb{R}^{M \times |S|}$ denotes the submatrix consisting of the columns of $\mathbf{A}$ indexed by $S$.

Let $\calR = \Z[x]/\langle x^N + 1\rangle$, where $N$ is a power of two integer. For any integer $q \geq 2$, let $\calR_q = \calR / q\calR$. 
For a polynomial $\va=\sum_{j=0}^{N-1} a_j x^j \in \calR$, we write its coefficient embedding as $\mathrm{coeff}(\va):=(a_0,\ldots,a_{N-1})\in\mathbb{Z}^N$. 
When clear from the context, we abuse notation by using boldface letters to denote ring elements or, equivalently, their corresponding coefficient vectors.
We define the $\ell_\infty$-, $\ell_1$-, and $\ell_2$-norms of a polynomial $\va$ with its coefficients, by 
$\|\va\|_\infty := \max_{0\le j\le N-1} |a_j|$, 
$\|\va\|_1 := \sum_{j=0}^{N-1} |a_j|$, and 
$\|\va\|_2 := \sqrt{\sum_{j=0}^{N-1} a_j^2},$ respectively.
Throughout, an unsuperscripted norm $\|\cdot\|$ refers to this coefficient embedding; the superscripted norm $\|\vp\|_\infty^{\mathrm{slot}} := \|\mathsf{Decode}_\Delta(\vp)\|_\infty$ denotes the $\ell_\infty$-norm of the decoded CKKS slot vector (or, slot-norm) with a scale factor $\Delta$, where $\mathsf{Decode}_\Delta$ and the C2S transform are detailed in \Cref{subsec: c2s}.

Let $\calR$ and $q$ be as defined above, and let $\chi$ be an error distribution over $\calR$. For a secret $\vs \in \calR_q$, a sample from the Ring-Learning-With-Errors (RLWE) distribution with respect to $q$, $\vs$, and $\chi$ is generated by sampling $\va \gets \calR_q$ uniformly at random and $\ve \gets \chi$, and outputting the pair $(\va, \vb = \va \cdot \vs + \ve \bmod q) \in \calR_q^2$. 
The error distribution $\chi$ is typically chosen as a zero-centered discrete Gaussian distribution over $\calR$, where every coefficient is sampled independently from a discrete Gaussian distribution over $\Z$, denoted by $\chi_\Z$. Common choices for the secret distribution include a zero-centered discrete Gaussian, or narrower distributions such as the uniform ternary or fixed-weight ternary distributions.

\subsection{Fully Homomorphic Encryption (FHE)}
A fully homomorphic encryption (FHE) consists of four procedures.
\begin{itemize}
    \item $\mathsf{Setup}(1^\lambda)$: For the security parameter $\lambda$, the algorithm returns a tuple of keys $({\sf pk}, {\sf sk}, {\sf evk})$. 
    \item $\mathsf{Enc}(\pk, m)$: For the public key $\pk$ and a message $m$, the algorithm returns a ciphertext $\sf ct$.
    \item $\mathsf{Dec}(\sk, \sf ct)$: For the (master) secret key $\sk$ and a ciphertext $\sf ct$, it outputs a value $m'$.
    \item $\mathsf{Eval}({\sf pk}, {\sf evk}, \mathcal C, {\sf ct}_1, \ldots, {\sf ct}_k)$: For the evaluation key $\evk$, a circuit $\mathcal C$ and a bundle of ciphertext $\set{\sf ct}_{i=1}^k$, it returns another ciphertext $\sf ct'$.
\end{itemize}
We say FHE is correct if for every pair $(\pk, \sk)$ output by $\sf Setup$, for any bounded depth circuit $C$, and any message $m_1, \ldots, m_k$, the following holds with overwhelming probability:
\begin{align*}
    \mathsf{Dec}({\sf sk}, \mathsf{Eval}({\sf pk}, \mathcal C, {\sf ct}_1, \ldots, {\sf ct}_k)) = C(m_1, \ldots, m_k)\enspace,
\end{align*}
where ${\sf ct}_i \gets \mathsf{Enc}(\pk, m_i)$ for all $i \le k$.

\subsection{Ciphertext Format of CKKS Scheme}\label{subsec: ciphertext formats}
This section briefly reviews the ciphertext format for the CKKS FHE scheme \cite{CKKS17}.
Understanding this format is sufficient to follow our attack.
For any RLWE-based FHE scheme, every ciphertext is a pair of ring elements of the form $(\va, \vb) \in \calR_q^2$. Let $\vs$ be the secret polynomial in $\calR_q$ with small coefficients.

Given the primitive $2N$-th root of unity $\zeta$, define $\zeta_i = \zeta^{5^i}$ for an index $i$.
Then, the discrete Fourier Transformation map $\mathrm{DFT}_N: \mathbb{R}[x]/(x^N+1) \to \mathbb{C}^{N/2}$ defined by $m(x) \mapsto \bigl(m(\zeta_0), \ldots, m(\zeta_{N/2-1})\bigr)$ is an isomorphism, with its inverse $\mathrm{iDFT}_N$.
Based on these maps, we introduce the encoding and decoding algorithms used in the CKKS FHE scheme:
\begin{itemize}
    \item $\tsf{Encode}(\vec{m}, \Delta)$: Given $\vec{m} \in \mathbb{C}^{N/2}$ and a public scaling factor $\Delta$, it outputs $\vm = {\round{\Delta \cdot \mathrm{iDFT}_N(\vec{m})}} \in \calR$.
    \item $\tsf{Decode}(\vm, \Delta):$ Given $\vm \in \calR$ and a public scaling factor $\Delta$, it outputs $\vec{m} = \mathrm{DFT}_N(\Delta^{-1}\cdot\vm) \in \mathbb{C}^{N/2}$.
\end{itemize}

We refer to the resulting encoded message $\mathbf{m} \in \calR$ as the plaintext.
A CKKS ciphertext corresponding to the plaintext $\vm$ satisfies 
$\vb - \va \cdot \vs = \vm + \ve \bmod q$, where the noise polynomial $\ve$ is initially sampled from a discrete Gaussian distribution with a standard deviation $3.19$, 
and is subsequently updated by homomorphic operations. 

Let $\mathsf{ct} = (\va, \vb)$ be a ciphertext and $\sk = (-\vs, 1)$ be a secret key of CKKS.\footnote{For the sake of simplicity, we do not strictly distinguish between the secret polynomial $\vs$ and the tuple $\sk$, referring to both as the secret key.}
For the decryption correctness, CKKS libraries \cite{heaan2022, HElib, lattigo, PALISADE, sealcrypto} configure parameters such that
\begin{align}\label{eqn: size_ckks}
\|\langle \mathsf{ct}, \sk \rangle \bmod q\|_\infty
&= \|\vb-\va\cdot\vs \bmod q\|_\infty \nonumber\\
&= \|\vm+\ve\|_\infty \le \Delta \enspace.
\end{align}
for the (pre-determined) scaling parameter $\Delta$.
This requires the condition $\Delta < q$ to hold, ensuring the scaled message can be accurately recovered.
Libraries \cite{OpenFHE, lattigo} often configure parameters where $\log q \approx \log \Delta + 10$, providing a sufficient margin between
$\Delta$ and $q$.

\medskip{\noindent\textbf{Secret Distributions.}}
In recent FHE schemes \cite{BCC+24, BGV14, CKKS17, DM15, FV12} and libraries \cite{OpenFHE, heaan2022, HElib, lattigo, sealcrypto}, the secret key $\vs$ is usually drawn from a set of ternary polynomials $\calT^N[x] = \set{\sum_{i=0}^{N-1} a_i x^i : a_i \in \set{-1,0,1}}$. It has been used to improve the bootstrapping algorithm of FHE schemes, one of the main bottlenecks of the practicality of FHE applications.

Another (well-known) choice of $\vs$ is a discrete Gaussian with standard deviation $\sigma = 3.2$ \cite{OpenFHE, HElib, lattigo}, for which $\|\vs\|_\infty \le 6\sigma$ in the coefficient embedding with high probability at the concrete dimensions we test; asymptotically, a union bound over the $N$ coefficients gives $\|\vs\|_\infty = O(\sigma\sqrt{\log N})$ with overwhelming probability (concretely $S=\sigma\sqrt{2\ln(2N/\epsilon)}$ for failure probability $\epsilon$), matching the error bound used above. The quantity relevant to the per-index recovery of \Cref{sec: main} is this coefficient $\ell_\infty$ bound, so we take $S=1$ for ternary secrets and $S=6\sigma$ for Gaussian secrets, matching the values used in our experiments.

{%
\subsection{Public Keys and Coefficient-to-Slot Transform}\label{subsec: c2s}
Beyond the public encryption key $\pk=(\va,\vb=\va\vs+\ve)$, CKKS deployments often release \emph{rotation keys}. 
The rotation keys $\{\mathsf{rotk}_g\}_{g\in\Gamma}$, also called \emph{Galois keys}, are used to perform homomorphic ring automorphisms $\vp(x)\mapsto \vp(x^g)$ on ciphertexts. These automorphisms implement slot rotations, complex conjugation, and the Coefficient-to-Slot (C2S) transform.

We say that a CKKS deployment exposes \emph{{C2S}-capable public material} if a rotation-key set $\{{\sf rotk}_g\}_g$, sufficient to evaluate a {C2S} layer, is released. 
This is usually the surface of bootstrapping-enabled or coefficient/slot-conversion workloads, \emph{not} the default of every CKKS deployment; 
however, it represents the most common choices in secure outsourcing applications. 

One such sufficient set is the full automorphism group $\Gamma$ generated by $x\mapsto x^5$ together with the conjugation automorphism, which is the same type of rotation-key material exposed by Lattigo \cite{lattigo}, HEaaN \cite{heaan2022}, and OpenFHE \cite{OpenFHE} whenever the workload performs bootstrapping or coefficient/slot conversion. Although C2S is most commonly identified with CKKS bootstrapping \cite{BMTH21, CHKKS18, HS18}, the same rotation key set is also released whenever the workload invokes a homomorphic linear transform that requires the conjugation automorphism \cite{HS18} or any protocol that interconverts coefficient and slot representations. Deployments that release only a strict subset of $\Gamma$ insufficient to span a single C2S layer fall outside this assumption; a shallow-depth application sharing rotation keys for one fixed linear map is one such case.

\medskip\noindent\textbf{Normalized Coefficient-to-Slot Transform.}
The normalized C2S transform is a public homomorphic linear map on CKKS ciphertexts that 
{maps the coefficient vector of the encrypted polynomial into encrypted slot values.}
Concretely, given a ciphertext encrypting a polynomial $\vp=\sum_{j=0}^{N-1}p_j x^j$ at scale $\Delta$, we denote $\mathsf{C2S}_\Delta$ that returns a pair of ciphertexts $(\mathsf{ct}_R,\mathsf{ct}_I)$ at scale $\Delta$, each with $N/2$ slots. Overloading the decoding notation to ciphertexts, we write {$\mathsf{Decode}_\Delta(\mathsf{ct}):=\mathsf{Decode}(\langle\mathsf{ct},\sk\rangle\bmod q, \Delta)$} for the slot vector obtained by decrypting $\mathsf{ct}$ and decoding the resulting polynomial. With this convention,
\begin{gather*}
\mathsf{Decode}_\Delta(\mathsf{ct}_R)
\approx \Delta^{-1}\bigl(p_0,\,p_1,\,\ldots,\,p_{N/2-1}\bigr)\in\mathbb{R}^{N/2},\\
\mathsf{Decode}_\Delta(\mathsf{ct}_I)
\approx \Delta^{-1}\bigl(p_{N/2},\,p_{N/2+1},\,\ldots,\,p_{N-1}\bigr)\in\mathbb{R}^{N/2},
\end{gather*}
up to a bounded additive slot-domain noise $\ve_{\mathsf{C2S}}$; our attack requires $\|\ve_{\mathsf{C2S}}\|_\infty^{\mathrm{slot}}$ to satisfy the slot-margin of \Cref{thm: main}, which is then empirically validated in \Cref{sec:experiment} for the evaluated Lattigo regimes. In words, the input polynomial's coefficient vector, viewed at scale $\Delta$, is exposed as two real-valued slot vectors. The construction realizes this as the standard CKKS C2S layer~\cite{BMTH21, CHKKS18, HS18} followed by a conjugation-based real/imaginary split using the public conjugation rotation key.\footnote{Lattigo's \texttt{dft.RepackImagAsReal} format fuses both steps and returns the pair $(\mathsf{ct}_R,\mathsf{ct}_I)$ directly.} 
The transform consumes $\ell_{\mathsf{C2S}}$ multiplicative levels, where $\ell_{\mathsf{C2S}}\in\{3,4\}$ for typical bootstrap-quality decompositions used by~\cite{OpenFHE, heaan2022, lattigo}, and uses only the rotation keys.

We use $\mathsf{C2S}_\Delta$ as a public primitive: the attack presented in \Cref{sec: main} invokes $\mathsf{C2S}_\Delta$ exactly once and treats its slot-domain noise as a constant additive term in the final correctness analysis.
}

\section{Reaction Attack on CKKS and Its Variants}\label{sec: main}
This section presents the threat model and the required definitions, followed by our reaction attack with its correctness proof. We first fix the algebraic setting and secret bound, then define the observable slot-domain reaction oracle, and finally {show how the public rotation keys for C2S turn that
one bit into coordinate-wise recovery of $\vs$.}
Throughout, we let $\| \vs \|_\infty \le S$ in the coefficient embedding (i.e., $\max_j |s_j| \le S$), for some small integer $S >0$. This is standard for secret keys in the FHE context, using ternary secrets $S=1$.

\medskip\noindent\textbf{Two Metric Domains.}
Throughout the attack, the processed ciphertext transitions between two distinct representations: the coefficient domain and the slot domain. The secret bound $S$, the correctness requirement of \Cref{eqn: size_ckks}, and the modulus constraints in \Cref{thm: main} are all formulated in the coefficient domain, where the target of recovery is the integer coefficient $s_j$ of the secret key $\vs$ satisfying $\max_j |s_j| \le S$ (or holding with overwhelming probability over the choice of the secret key). By contrast, the reaction oracle operates strictly in the slot domain, where validity checks are evaluated via the $\|\cdot\|_\infty^{\mathrm{slot}}$ norm against a threshold $\tau$ (see \Cref{sec: reaction-oracle}). 

The public transformation C2S (see \Cref{subsec: c2s}) bridges these two domains, homomorphically relocating the secret key coefficients into corresponding slots. This allows the adversary to isolate a single coefficient and establish a direct operational link to the oracle's response bit (see \Cref{fig: attack-overview}).

{
\subsection{Threat Model}\label{sec: threat-model}
We consider a secure-outsourcing protocol where an honest client ({\sf C}) delegates homomorphic computation to a malicious server ({\sf S}). The client {\sf C} generates the keys, releases the public material, and applies an application-level admissibility predicate to the decoded slot values of any returned ciphertext before proceeding with the protocol. 
For each reaction query, the server {\sf S} receives only a single bit of leakage: whether the queried ciphertext is \textit{valid} or \textit{invalid}.

\medskip\noindent\textbf{Public Material.}
The client $\mathsf{C}$ publishes the public encryption key $\pk=(\va,\vb=\va\vs+\ve)$
and the set of public rotation keys $\{\mathsf{rotk}_g\}_g$ required to evaluate the application's homomorphic linear transforms, including the C2S keys when the workload utilizes it. 
The adversary may exploit only these honestly generated public values; affecting the key generation is strictly prohibited. This directly matches the standard ``honest key generation, malicious computation'' model established in prior literature~\cite{CCCM22, CCCM24,LMSV12}.

\medskip\noindent\textbf{Reaction Oracle.}
Upon decrypting and decoding a returned ciphertext, the client {\sf C} checks whether the magnitudes of the decoded slots lie within an admissible range bounded by a threshold $\tau$ (see \Cref{sec: reaction-oracle} for a typical instantiation). 
The resulting validation bit is observable to {\sf S} either directly (e.g., via an explicit abort signal) or indirectly (e.g., by observing whether the protocol continues or halts). 
We formally model this verification mechanism as a slot-domain reaction oracle $\mathcal{O}_{\mathrm{react}}$. 

\medskip\noindent\textbf{Real-World Reaction Leakage.}
The bit appears whenever the client's decoded slot vector is checked against an admissible range and the outcome affects an observable protocol step. Concrete examples include
(a) outsourced inference workflows where the client decodes class logits, treats a magnitude or range violation as a malformed response, and either re-issues the query or aborts, which the server observes as a retry/no-retry signal;
(b) protocols that use the decoded value as the input to a public follow-up computation, so that an obvious overflow leaks via the next-round ciphertext;
(c) deployments such as those discussed in~\cite{CCCM22, CCCM24, LM21, LMSS22} that already model an explicit \texttt{valid}~/~\texttt{invalid} response.

Note that while our attack systematically exploits this reaction bit to extract the key, it operates strictly at the protocol layer; it does not compromise the underlying RLWE hardness assumption or violate the IND-CPA security of the CKKS primitive itself.

\medskip\noindent\textbf{Adversarial Goal.}
For given public material and adaptive query access to the reaction oracle $\mathcal{O}_{\mathrm{react}}$, the adversary's (server's) goal is to fully recover the secret key $\vs$.

\medskip\noindent\textbf{Scope of the Attack.}
The reaction oracle arises under three deployment conditions taken together: (i) the client decrypts and decodes ciphertexts supplied by the evaluator; (ii) it applies an application-level admissibility check (e.g., a magnitude or range test) to the decoded slots; and (iii) the outcome of that check influences some behavior observable to the evaluator, such as a retry, an abort, an error, or a subsequent protocol message. Only when all three hold, the attack applies. 

In particular, two natural conditions place a deployment outside our scope. First, if every decryption-dependent behavior is made independent of the admissibility outcome (e.g., a standardized response regardless of the check), condition~(iii) fails and no reaction bit is exposed. Second, if the client decrypts only \emph{legitimate} ciphertexts, which is assumed by the IND-CPA$^\text{D}$ model,
removes the surface. Our threat model targets the common intermediate case in which arbitrary evaluator-supplied ciphertexts are decrypted and a decryption-dependent behavior is observable.

Since the attack is mounted with fresh ciphertexts for every query, the transcript does not exhibit strong query-to-query correlations, circumventing defenses that flag structured perturbations of a fixed target, though per-key monitoring still applies (\Cref{sec: discussion}). Consequently, relying on FHE alone is insufficient. Deployments must mitigate these reaction-based side channels, for example, by hardening client responses or limiting oracle access. 

We do not consider configurations where the server {\sf S} directly learns decrypted plaintext values, such as in standard chosen-ciphertext (CCA) or CPA$^\text{D}$ attack models. 
Furthermore, the server {\sf S} has no control over the key generation process, nor the client's internal randomness. 
Traditional physical-layer side channels, including timing analysis, power analysis, and fault-injection attacks, are similarly out of scope.

\subsection{Definition of Reaction Oracle}\label{sec: reaction-oracle}
We instantiate the reaction oracle directly in the \emph{decoded-slot} domain, mirroring the practical validation predicates a CKKS client applies post-decoding. 
The client decodes the returned ciphertext under its designated scale $\Delta$ and evaluates whether the resulting slot magnitudes conform to the deployment's expected range. 
We do not require the oracle to use any internal awareness or measurement of the underlying CKKS noise; only a simple range check is sufficient. 

Let $\tau > 0$ be an admissibility threshold defined by the deployment constraints. 
The default configuration in our framework is $\tau=1$, aligning with standard CKKS messages satisfying $\|\vm\|_\infty^{\mathrm{slot}} \le 1$. 
Given a queried ciphertext $\ct=(\va,\vb) \in \calR_q^2$, we denote the maximum decoded slot magnitude as $\mu(\ct)$, defined by:
\begin{align*}
    \mu(\mathsf{ct})
    :=& \|\vb-\va\vs \bmod q\|_\infty^{\mathrm{slot}} \\
    =& \|\mathsf{Decode}_\Delta(\vb-\va\vs \bmod q)\|_\infty\enspace.
\end{align*}
Using this metric, we formally define the \textit{slot-domain reaction oracle} as:
\begin{align*}
  \mathcal{O}_{\mathrm{react}}(\mathsf{ct}) = \begin{cases}
    \texttt{valid} & \text{if } \mu(\mathsf{ct})\le\tau,\\
    \texttt{invalid} & \text{otherwise.}
  \end{cases}
\end{align*}
The standard coefficient-domain correctness condition, $\|\vb-\va\vs \bmod q\|_\infty \le \Delta$, is employed across major FHE implementations~\cite{OpenFHE,lattigo, sealcrypto} to guarantee decryption correctness. 
This coefficient-level bound inherently implies a corresponding slot-domain bound, up to the operator norm of the canonical CKKS decoding map. 
In this work, however, we take the observable slot-domain predicate itself as the reaction oracle, precisely matching the client's system-level deployment scenario.
}

\medskip\noindent\textbf{Query Model.} 
The adversary is permitted to adaptively query $\mathcal{O}_{\mathrm{react}}$ with arbitrary ciphertexts under the target key. Crucially, the submitted ciphertexts are neither reused nor structurally unrealistic.

Instead, each query is independently randomized by adding a fresh or homomorphically evaluated encryption of zero (a construction we detail in \Cref{sec: attack-on}). As a result, the adversarial transcript never resubmits an identical ciphertext vector, nor does it present bit-by-bit deterministic modifications of a fixed target ciphertext. Prior to decryption, the query is a syntactically valid CKKS ciphertext, and is computationally indistinguishable from a fresh encryption under the RLWE assumption; hence, transcript-filtering or correlation-based defenses do not apply. However, per-key monitoring of query volume or rejection rate remains a valid detector (\Cref{sec: discussion}).

\subsection{Reaction Attack on CKKS}\label{sec: attack-on}

\begin{figure*}[t]
\centering
\resizebox{\textwidth}{!}{%
\begin{tikzpicture}[
  node distance=4mm and 7mm,
  box/.style={draw, rounded corners, align=center, minimum height=11mm, inner sep=3pt, font=\small},
  op/.style={font=\footnotesize, align=center, midway, above},
  >=stealth
]
\node[box] (pk) {$\pk=(\va,\vb)$};
\node[box, right=of pk] (c1) {$\mathsf{ct}^{(1)}$\\[1pt]\footnotesize dec.\ $=\Delta\vs+\ve$};
\node[box, right=of c1] (c2) {$(\mathsf{ct}_R^{(2)},\mathsf{ct}_I^{(2)})$\\[1pt]\footnotesize slot $k=s_k,\,s_{k+N/2}$};
\node[box, right=of c2] (c3) {$\mathsf{ct}^{(3,k,\alpha,\beta)}$\\[1pt]\footnotesize slot $k=\alpha s_{k,\beta}$};
\node[box, right=of c3] (sub) {\ding{175} $\mathsf{ct}^{(k,\alpha,\beta)}$ or \ding{177} $\mathsf{ct}_\textrm{sign}^{(k,\alpha,\beta)}$\\
[1pt]\footnotesize slot $k=\alpha s_{k,\beta}$ or $\alpha (s_{k,\beta}-1)$};
\node[right=7mm of sub, font=\small, align=center] (bit) {\texttt{valid}~/\\\texttt{invalid}};
\draw[->] (pk) -- node[op]{\ding{172} {$\va{-}\Delta$}\\ \\} (c1);
\draw[->] (c1) -- node[op]{\ding{173} {$\mathsf{C2S}_\Delta$}\\ \\} (c2);
\draw[->] (c2) -- node[op]{\ding{174} $\times\,\boldsymbol\pi^{(k,\alpha)}$\\ \\} (c3);
\draw[->] (c3) -- node[op]{~~~~~\ding{177} $+\,\mathsf{Enc}_\pk(0)-\,\mathsf{Enc}_\pk(\pi^{(k, \alpha)})$\\ \ding{175} $+\,\mathsf{Enc}_\pk(0)$\\ \\} (sub);
\draw[->] (sub) -- node[op]{\ding{178} $\mathcal{O}_{\mathrm{react}} \times |\calS| \Rightarrow$ recover $\vs$\\ 
\ding{176} $\mathcal{O}_{\mathrm{react}} \times N \Rightarrow$ recover $\calS$\\ \\} (bit);
\end{tikzpicture}%
}
\caption{
Overview of the Reaction Attack (\Cref{alg:on-attack}). 
For $0\le k < N/2$, $s_{k, \beta}$ denotes $s_k$ or $s_{k+N/2}$, depending on $\beta\in\{R, I\}$.
The attack proceeds as follows:
\textbf{(i) Offline (\ding{172}, \ding{173})}: 
\ding{172} Public-key shift, 
and \ding{173} homomorphic C2S locates a secret key coefficient into a target slots.
\textbf{(ii) Online Phase 1 (\ding{174}, \ding{175}, \ding{176})}: 
Support recovery in $N$ queries.
\ding{174} One-hot mask vector isolates a target slot, and
\ding{175} a fresh encryption re-randomizes it. 
\ding{176} Resulting 
query recovers the support $\mathcal{S} := \{j : s_j \neq 0\}$. 
\textbf{(iii) Online Phase 2 (\ding{174}, \ding{177}, \ding{178})}: 
Sign recovery in $|\calS|$ queries.
\ding{174} Isolation, 
\ding{177} re-randomization, and subtraction by an encryption of the mask, deviates the slot values, which 
\ding{178} reveals the sign when queried.
The full secret $\vs$ is then recovered. 
}
\label{fig: attack-overview}
\end{figure*}
{
The attack links each oracle query to a single coefficient of $\vs$ at a time, using $\pk$ and the rotation keys for $\mathsf{C2S}_\Delta$. 
\Cref{fig: attack-overview} sketches the pipeline and \Cref{alg:on-attack} gives it in full.
Note that, for better visibility, $k$, $\alpha$, and $\beta$ are omitted in the superscripts of $\ct^{(3)}$, $\ct$, and $\ct_\mathrm{sign}$ in \Cref{alg:on-attack}.

\medskip\noindent\textbf{Offline Step.}
The adversary first performs two public operations that are independent
of the oracle queries. Starting from the public encryption key
$\pk=(\va,\vb=\va\vs+\ve)$, it forms $\mathsf{ct}^{(1)}:=(\va-\Delta,\,\vb)$, which decrypts to $\Delta\vs+\ve$ under the target key, since $\vb-(\va-\Delta)\vs=\Delta\vs+\ve\pmod q$.\footnote{One may need to lower the level if $\pk$'s modulus is larger than the usual ciphertext modulus, typically via level down or modulus switching.}

The adversary then applies the public normalized C2S to obtain $(\mathsf{ct}_R^{(2)},\mathsf{ct}_I^{(2)})\leftarrow\mathsf{C2S}_\Delta(\mathsf{ct}^{(1)})$. 
This carries the secret key coefficients into slots. 
Concretely, when decrypted and decoded, the $k$-th slot of
$\mathsf{ct}_R^{(2)}$ contains $s_k$, while the $k$-th slot of
$\mathsf{ct}_I^{(2)}$ contains $s_{k+N/2}$, each within a noise of slot-norm $\ll\tau$.

\medskip\noindent\textbf{Online Step.}
In the online step, the adversary queries the ciphertexts obtained based on $\mathsf{ct}_R^{(2)}$ and $\mathsf{ct}_I^{(2)}$. 
We divide the queries into two types: (i) {Support queries}, which are employed to identify the support set $\mathcal{S} := \{j : s_j \neq 0\}$, and (ii) {Sign queries}, which are used to determine the sign associated with each index in $\mathcal{S}$.

\medskip\noindent\textit{Phase 1: Support query.}
A support query tests a single secret coefficient by means of a one-hot
plaintext mask.  Let \(\boldsymbol e_k\in\mathbb R^{N/2}\) denote the
\(k\)-th standard slot basis vector.  For an amplitude \(\alpha>0\),
define the one-hot plaintext mask
\[
    \boldsymbol\pi^{(k,\alpha)}
    := \mathsf{Encode}(\alpha\boldsymbol e_k,\Delta)\enspace,
\]
namely the CKKS plaintext encoding, at scale \(\Delta\), of the slot
vector with value \(\alpha\) in slot \(k\) and zero elsewhere.

For \(\beta\in\{R,I\}\), define the target secret coefficient by
\[
    s_{k,\beta}
    :=
    \begin{cases}
        s_k, & \text{if } \beta=R\enspace,\\
        s_{k+N/2}, & \text{if } \beta=I\enspace.
    \end{cases}
\]
The adversary computes
\[
    \mathsf{ct}^{(3,k,\alpha, \beta)}
    := \mathsf{ct}^{(2)}_\beta\cdot\boldsymbol\pi^{(k,\alpha)}\enspace,
\]
which contains a signal only in
the \(k\)-th slot. 
The signal is \(\alpha s_{k,\beta}\), up to bounded
slot-domain noise.
The adversary then re-randomizes the query by adding a fresh encryption
of zero (at the same level and scale):
\[
    \mathsf{ct}^{(k,\alpha,\beta)}
    := \mathsf{ct}^{(0)}+\mathsf{ct}^{(3,k,\alpha,\beta)},\qquad
    \mathsf{ct}^{(0)}\gets\mathsf{Enc}_{\pk}(0)\enspace.
\]
This fresh zero encryption does not change the decoded signal, but makes
the submitted ciphertext independent of previous queries. Therefore,
the underlying slot vector has the form
\[
    \alpha s_{k,\beta}\boldsymbol e_k+\widetilde\delta\enspace,
\]
where \(\widetilde\delta\) denotes the resulting slot-domain noise.
That is, the slot vector effectively constitutes a noisy one-hot vector, carrying $\alpha s_{k, \beta}$ in its $k$-th coordinate. 

We now fix an amplitude \(\alpha>0\) for the attack.  This value
is chosen so that the noise-only case remains below the reaction
threshold \(\tau\), whereas the signal produced by any nonzero ternary
coefficient exceeds \(\tau\).  The precise slot-margin and modulus
conditions ensuring this separation are stated in the correctness
analysis below.

The reaction oracle then distinguishes whether
the targeted coefficient is zero:
\[
    \calO_{\mathrm{react}}
    \bigl(\mathsf{ct}^{(k,\alpha,\beta)}\bigr)
    =
    \begin{cases}
        \mathsf{valid}, & \text{if } s_{k,\beta}=0\enspace,\\
        \mathsf{invalid}, & \text{if } s_{k,\beta}\neq0\enspace.
    \end{cases}
\]
Thus, one oracle query determines whether the coefficient selected by
\((k,\beta)\) lies in the support $\calS$. 
Repeating this for
all \(k\in\{0,\ldots,N/2-1\}\) and \(\beta\in\{R,I\}\) recovers
$\mathcal S=\{j:s_j\neq0\}$ using exactly \(N\) reaction queries.

\medskip\noindent\textit{Phase 2: Sign query.}
A second pass over \(\mathcal S\) resolves the signs.  For
\(j\in\mathcal S\), choose \((k,\beta)\) so that \(s_{k,\beta}=s_j\).
The adversary first forms a fresh support-type query ciphertext
\(\mathsf{ct}^{(k,\alpha,\beta)}\) as above, then subtracts an
encryption of the one-hot mask:
\[
    \mathsf{ct}_{\mathrm{sign}}^{(k,\alpha,\beta)}
    :=
    \mathsf{ct}^{(k,\alpha,\beta)}
    -\mathsf{ct}_\mathrm{off}^{(k,\alpha)}\enspace,
\]
where $\mathsf{ct}_\mathrm{off}^{(k,\alpha)}
\gets \mathsf{Enc}_{\pk}\big(\boldsymbol\pi^{(k,\alpha)}\big)$.
Then the underlying target slot has signal
\[
    \alpha(s_{k,\beta}-1) \enspace,
\]
up to a bounded slot-domain noise. 
The signal is \(0\) when
\(s_{k,\beta}=+1\), and \(-2\alpha\) when \(s_{k,\beta}=-1\); hence, 
the oracle accepts exactly the
positive coefficients and rejects the negative ones. 
Thus the sign of
each nonzero coefficient is recovered with one additional query, and the
total query count becomes
\[
    N+|\mathcal S|\le 2N\enspace.
\]
All per-query multiplications are by plaintext masks, so 
no relinearization key is required.\footnote{In
Lattigo, \texttt{MulRelinNew} on a plaintext operand does not use
\(\mathsf{rlk}\).}

}

\begin{algorithm}[t]
\caption{$O(N)$ Reaction Attack on CKKS.}
\label{alg:on-attack}
\SetAlgoLined
\SetAlgoNoEnd
\DontPrintSemicolon
\KwIn{Slot-domain reaction oracle
$\mathcal{O}_{\mathrm{react}}$ with threshold $\tau$,
public material $(\pk=(\va, \vb)$, $\{\mathsf{rotk}_g\})$, parameters $(N,q_0,\Delta)$,
and a constant $\alpha$ satisfying the slot-margin and modulus
conditions of \Cref{thm: main}.}
\KwOut{Ternary secret $\vs=(s_0,\ldots,s_{N-1})$.}

$\mathsf{ct}^{(1)} \gets (\va-\Delta,\,\vb)$\tcp*{Offline shift}
$(\mathsf{ct}_R^{(2)},\mathsf{ct}_I^{(2)})
 \gets \mathsf{C2S}_\Delta(\mathsf{ct}^{(1)})$\tcp*{Offline C2S}
$\mathcal{S}\gets\emptyset$, $\vs\gets \mathbf 0$\;

\tcc{Phase~1: Support recovery, one query per coefficient.}
\ForEach{$\beta\in\{R,I\}$}{
    \For{$k=0$ \KwTo $N/2-1$}{
        $j \gets k$ if $\beta=R$, else $k+N/2$\;
        $\mathsf{ct}^{(3)}
        \gets \mathsf{ct}_\beta^{(2)}
        \cdot \boldsymbol\pi^{(k,\alpha)}$\;
        Sample fresh $\mathsf{ct}^{(0)}\gets\mathsf{Enc}_{\pk}(0)$\;
        $\mathsf{ct} \gets \mathsf{ct}^{(0)}+\mathsf{ct}^{(3)}$\;
        \If{$\mathcal{O}_{\mathrm{react}}(\mathsf{ct})
        = \texttt{invalid}$}{
            $\mathcal{S}\gets \mathcal{S}\cup\{j\}$\;
        }
    }
}

\tcc{Phase~2: Sign recovery, one query per nonzero index.}
\ForEach{$j\in\mathcal{S}$}{
    $\beta \gets R$ if $j<N/2$ else $I$\;
    $k\gets j$ if $\beta=R$ else $j-N/2$\;
    $\mathsf{ct}^{(3)}
    \gets \mathsf{ct}_\beta^{(2)}
    \cdot \boldsymbol\pi^{(k,\alpha)}$\;
    $\mathsf{ct}_{\mathrm{off}}^{(k,\alpha)}
    \gets \mathsf{Enc}_{\pk}(\boldsymbol\pi^{(k,\alpha)})$\;
    Sample fresh $\mathsf{ct}^{(0)}\gets\mathsf{Enc}_{\pk}(0)$\;
    $\mathsf{ct}_\mathrm{sign}
    \gets \mathsf{ct}^{(0)}
    +\mathsf{ct}^{(3)}
    -\mathsf{ct}_{\mathrm{off}}^{(k,\alpha)}$\;
    \If{$\mathcal{O}_{\mathrm{react}}(\mathsf{ct}_\mathrm{sign})
    = \texttt{valid}$}{
        $s_j \gets +1$\;
    }\Else{
        $s_j \gets -1$\;
    }
}
\Return{$\vs$}
\end{algorithm}

{
\medskip\noindent\textbf{Mask Amplitude.}
The offline and per-query noise terms combine as follows.  Phase~1 and Phase~2 queries decode respectively to
\[
    \alpha s_{k,\beta}\boldsymbol e_k+\widetilde\delta
    \quad\text{and}\quad
    \alpha(s_{k,\beta}-1)\boldsymbol e_k+\widetilde\delta' \enspace.
\]
Write
\[
    \nu_\alpha
    :=
    B_{\mathsf{shift}}^{\mathrm{coeff}}/\Delta
    +B_{\mathsf{C2S}}^{\mathrm{slot}}\enspace,
\]
for the slot-domain noise carried through \(\mathsf{C2S}_\Delta\) per
unit mask amplitude.  Then the noise terms are bounded by
\begin{align*}
\|\widetilde\delta\|_\infty^{\mathrm{slot}}
&\le
\eta_{\sup}
:=
\alpha\nu_\alpha
+B_{\mathsf{pmult}}^{\mathrm{slot}}
+B_{\mathsf{fresh}}^{\mathrm{slot}}\enspace,\\
\|\widetilde\delta'\|_\infty^{\mathrm{slot}}
&\le
\eta_{\mathrm{sign}}
:=
\eta_{\sup}
+B_{\mathsf{off}}^{\mathrm{slot}}\enspace,
\end{align*}
where \(B_{\mathsf{pmult}}^{\mathrm{slot}}\),
\(B_{\mathsf{fresh}}^{\mathrm{slot}}\), and
\(B_{\mathsf{off}}^{\mathrm{slot}}\) are the noise bounds introduced by the
plaintext masking, the fresh encryption of zero, and the Phase-2 offset,
respectively.

The role of \(\alpha\) is to separate zero and nonzero ternary
coefficients with respect to the reaction threshold \(\tau\).  When
\(s_{k,\beta}=0\), the Phase-1 query contains only noise; when
\(s_{k,\beta}\neq0\), the targeted slot contains a signal of magnitude
approximately \(\alpha\).  Since the mask scales both the signal
\(\alpha s_{k,\beta}\) and the inherited C2S noise
\(\alpha\nu_\alpha\), changing \(\alpha\) does not improve their
ratio.  It only places the signal and noise relative to the fixed
threshold \(\tau\).

Ignoring the smaller additive terms
\(B_{\mathsf{pmult}}^{\mathrm{slot}}\),
\(B_{\mathsf{fresh}}^{\mathrm{slot}}\), and
\(B_{\mathsf{off}}^{\mathrm{slot}}\), Phase~1 separates zero from
nonzero coefficients whenever
\[
    \tau < \alpha < \tau/\nu_\alpha \enspace.
\]
The full slot-margin inequalities, including these additive terms and
the Phase-2 sign separation, are stated in \Cref{thm: main}.  This
admissible window is independent of \(N\), since both the slot signal
\(\alpha\) and the threshold \(\tau\) are independent of \(N\).  In
our experimental regimes, \(\nu_\alpha\ll1\) holds with a large margin
(see \Cref{sec:experiment}), so we fix
$\alpha$ to $4$.
}

\subsection{Correctness and Query Complexity}

\begin{thm}\label{thm: main}
Consider a CKKS deployment with parameters $(N,q_0,\Delta)$, level-zero
modulus $q_0$, ternary secret $\vs\in\{-1,0,1\}^N$, public material
$(\pk,\{\mathsf{rotk}_g\})$ sufficient for $\mathsf{C2S}_\Delta$, and
reaction oracle $\mathcal{O}_{\mathrm{react}}$ with threshold
$\tau\ge1$. Let $\eta_{\sup}$ and
$\eta_{\mathrm{sign}}=\eta_{\sup}+B_{\mathsf{off}}^{\mathrm{slot}}$
 the noise bounds of the decoded slots of Phase~1 and Phase~2 queries,
respectively. Suppose an amplitude $\alpha>0$ satisfies:

\emph{(i) Slot-margin.}
\begin{align}
\eta_{\sup}&<\tau\enspace,
&\alpha-\eta_{\sup}&>\tau\enspace, \label{eq: alpha-margin}\\
\eta_{\mathrm{sign}}&<\tau\enspace,
&2\alpha-\eta_{\mathrm{sign}}&>\tau\enspace. \notag
\end{align}

\emph{(ii) No modular wrap.}
\begin{equation}\label{eq: modulus}
2\alpha\,\kappa_{\max}+B_{\mathrm{coeff}}<q_0/2\enspace,
\end{equation}
where $\kappa_{\max}\le\Delta/\sqrt{N/2}$ is the largest
coefficient-domain magnitude produced by encoding a unit-magnitude slot
and $B_{\mathrm{coeff}}$ bounds the coefficient-domain non-signal terms.

Then \Cref{alg:on-attack} recovers $\vs$ in at most $2N$
reaction-oracle queries and $\poly(N,\log q)$ time, except with
negligible probability.
\end{thm}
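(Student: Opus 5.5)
The plan is to verify the correctness of \Cref{alg:on-attack} query by query. I would do this by computing exactly what each submitted ciphertext decrypts to in the coefficient domain. Then I would argue that no modular reduction occurs, so decoding is linear, and finally compare the decoded slot vector against $\tau$ using the slot-margin conditions. The query count and running time are then immediate.

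\emph{Step 1: decryption of the offline ciphertexts.} From $\pk=(\va,\vb=\va\vs+\ve)$, the shifted ciphertext $\mathsf{ct}^{(1)}=(\va-\Delta,\vb)$ satisfies $\vb-(\va-\Delta)\vs=\Delta\vs+\ve \pmod q$. Since $\|\Delta\vs+\ve\|_\infty\le\Delta+\|\ve\|_\infty\ll q/2$, this holds over $\calR$ without wrap. By the defining property of $\mathsf{C2S}_\Delta$ from \Cref{subsec: c2s}, $\mathsf{Decode}_\Delta(\mathsf{ct}^{(2)}_\beta)$ has $k$-th entry $s_{k,\beta}$ up to slot noise of size at most $B^{\mathrm{coeff}}_{\mathsf{shift}}/\Delta+B^{\mathrm{slot}}_{\mathsf{C2S}}=\nu_\alpha$. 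The first term comes from $\ve/\Delta$ carried through the linear C2S map.

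\emph{Step 2: exact form of the decrypted query and absence of wrap.} Plaintext multiplication by $\boldsymbol\pi^{(k,\alpha)}$ multiplies slotwise, up to rescaling noise $B^{\mathrm{slot}}_{\mathsf{pmult}}$. The Phase-1 query $\mathsf{ct}^{(0)}+\mathsf{ct}^{(3)}$ therefore has decryption polynomial equal to $\mathsf{Encode}(\alpha s_{k,\beta}\boldsymbol e_k,\Delta)$ plus non-signal terms. Likewise, the Phase-2 query has decryption polynomial $\mathsf{Encode}(\alpha(s_{k,\beta}-1)\boldsymbol e_k,\Delta)$ plus non-signal terms. The signal polynomial has coefficient magnitude at most $2\alpha\kappa_{\max}$, because $|s_{k,\beta}-1|\le2$ and encoding is linear. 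The non-signal terms are bounded by $B_{\mathrm{coeff}}$. Condition \eqref{eq: modulus} then ensures that the representative of $\vb-\va\vs \bmod q_0$ in $\Z_{q_0}$ equals the true integer polynomial. Decoding is therefore linear on it, and the decoded vector is exactly $\alpha s_{k,\beta}\boldsymbol e_k+\widetilde\delta$ (respectively $\alpha(s_{k,\beta}-1)\boldsymbol e_k+\widetilde\delta'$). Here $\|\widetilde\delta\|_\infty^{\mathrm{slot}}\le\eta_{\sup}$ and $\|\widetilde\delta'\|_\infty^{\mathrm{slot}}\le\eta_{\mathrm{sign}}$, obtained by summing the listed bounds with the triangle inequality.

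\emph{Step 3: case analysis against $\tau$.} In Phase 1, if $s_{k,\beta}=0$ then $\mu\le\eta_{\sup}<\tau$, so the oracle returns \texttt{valid}. If $s_{k,\beta}=\pm1$, then the $k$-th slot has magnitude at least $\alpha-\eta_{\sup}>\tau$, so it returns \texttt{invalid}. Hence $\mathcal S$ is recovered exactly. In Phase 2, if $s_j=1$ then $\mu\le\eta_{\mathrm{sign}}<\tau$, giving \texttt{valid}. If $s_j=-1$, the target slot has magnitude at least $2\alpha-\eta_{\mathrm{sign}}>\tau$, giving \texttt{invalid}. Hence every sign is correct. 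The query count is $N+|\mathcal S|\le2N$. Each query costs $O(1)$ encryptions and plaintext multiplications, and the offline step costs one C2S, so the total running time is $\poly(N,\log q)$.

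\emph{Main obstacle.} The delicate part is making the noise bounds hold for all $N+|\mathcal S|$ queries simultaneously. The fresh terms $B^{\mathrm{slot}}_{\mathsf{fresh}}$ and $B^{\mathrm{slot}}_{\mathsf{off}}$, as well as the public-key error, are Gaussian, so they are bounded only with high probability. I would handle this with tail bounds of the form $O(\sigma\sqrt{\log(N/\epsilon)})$ per coefficient, propagated through the decoding operator norm, followed by a union bound over at most $2N$ queries. This makes the total failure probability negligible. The C2S noise is deterministic, being taken as the hypothesized bound from \Cref{subsec: c2s}.
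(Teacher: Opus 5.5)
Your proposal is correct and follows essentially the same route as the paper's proof. Both arguments write each Phase-1/Phase-2 query as a one-hot signal plus bounded noise, use condition (ii) with the Phase-2 worst case $2\alpha\kappa_{\max}$ to rule out wrap-around modulo $q_0$, run the case analysis against $\tau$ via the slot-margin inequalities, and finish with a union bound over the at most $2N$ queries. The only difference is cosmetic: the paper first conditions on the once-drawn key-generation randomness, so that the C2S and key-switching noise meet their bounds with overwhelming probability, and only then union-bounds the per-query fresh randomness, whereas you take the C2S bound as a given hypothesis.
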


{
\begin{proof}[Proof of \Cref{thm: main}]
Fix a pair $(k,\beta)$ with $k\in\{0,\ldots,N/2-1\}$ and
$\beta\in\{R,I\}$, and write $s_{k,\beta}$ for the targeted coefficient:
$s_{k,R}=s_k$ and $s_{k,I}=s_{k+N/2}$.

\medskip\noindent\emph{Decoded slot values.} 
The construction of \Cref{sec: attack-on} shows that the Phase~1 query ciphertext decodes, in the slot domain, to $\alpha s_{k,\beta}\,\boldsymbol{e}_k+\widetilde\delta$ with $\|\widetilde\delta\|_\infty^{\mathrm{slot}}\le\eta_{\sup}$, and the Phase~2 query ciphertext to $\alpha(s_{k,\beta}-1)\,\boldsymbol{e}_k+\widetilde\delta'$ with $\|\widetilde\delta'\|_\infty^{\mathrm{slot}}\le\eta_{\mathrm{sign}}$; the additional $B_{\mathsf{off}}^{\mathrm{slot}}$ in $\eta_{\mathrm{sign}}$ accounts for the fresh encryption of the offset $\boldsymbol\pi^{(k,\alpha)}$ subtracted in the Phase-2 step of \Cref{alg:on-attack}.\footnote{If the offset is implemented by plaintext subtraction $\mathsf{ct}-\boldsymbol\pi^{(k,\alpha)}$ in place of $\mathsf{ct}-\mathsf{Enc}_{\pk}(\boldsymbol\pi^{(k,\alpha)})$, the term $B_{\mathsf{off}}^{\mathrm{slot}}$ vanishes; we use the encrypted offset to keep the submitted ciphertext distribution close to a fresh CKKS ciphertext.}

\medskip\noindent\emph{No modular wrap (Condition~(ii)).} 
The decoded values read the level-zero decryption residue through the centered reduction modulo $q_0$, which preserves the signal only if no coordinate leaves $[-q_0/2,q_0/2)$. 
The largest coefficient-domain magnitude is the Phase~2 worst case $s_{k,\beta}=-1$, whose encoded slot vector $\alpha(s_{k,\beta}-1)\boldsymbol{e}_k$ has $\ell_\infty$-norm $2\alpha\kappa_{\max}$; together with the non-signal terms bounded by $B_{\mathrm{coeff}}$ the total is below $q_0/2$ by \Cref{eq: modulus}. Hence no coordinate wraps and the slot identities above hold.

\medskip\noindent\emph{Phase~1 (support).} If $s_{k,\beta}=0$ the decoded slot vector equals $\widetilde\delta$, of norm $\le\eta_{\sup}<\tau$ by \Cref{eq: alpha-margin}, so the oracle accepts. If $s_{k,\beta}\ne0$ then $|s_{k,\beta}|=1$, so slot~$k$ has magnitude $\ge\alpha-\eta_{\sup}>\tau$ and the oracle rejects. One query at each $(k,\beta)$ with $k\in[N/2]$, $\beta\in\{R,I\}$, therefore decides whether the addressed coefficient is zero, and the $N$ such queries identify the support $\mathcal{S}=\{j:s_j\ne0\}$.

\medskip\noindent\emph{Phase~2 (sign).} For each $j\in\mathcal{S}$ the targeted slot holds $\alpha(s_{k,\beta}-1)+\widetilde\delta'_k$ with $s_{k,\beta}-1\in\{0,-2\}$. By \Cref{eq: alpha-margin}, $s_{k,\beta}=+1$ gives magnitude $\le\eta_{\mathrm{sign}}<\tau$ (accept) and $s_{k,\beta}=-1$ gives magnitude $\ge 2\alpha-\eta_{\mathrm{sign}}>\tau$ (reject), so one query per $j\in\mathcal{S}$ recovers its sign.

\medskip\noindent\emph{Query count and running time.} Phase~1 issues $N$ queries and Phase~2 issues $|\mathcal{S}|\le N$, for at most $2N$ in total; each query is assembled from public CKKS operations, so the algorithm runs in $\poly(N,\log q)$ time.

\medskip\noindent\emph{Success probability.} The key-generation and evaluation-key randomness is drawn once; conditioned on it, the $\mathsf{C2S}_\Delta$ and key-switching noise terms entering $B_{\mathsf{C2S}}^{\mathrm{slot}}$ and $B_{\mathrm{coeff}}$ meet their stated bounds except with negligible probability. Conditioned on such a key, each query's remaining noise depends only on that query's fresh-encryption randomness and stays within $\eta_{\sup}$, $\eta_{\mathrm{sign}}$ except with negligible probability; a union bound over the $2N$ queries then makes every oracle answer correct, so $\vs$ is recovered except with negligible probability.
\end{proof}
}

{%
\Cref{thm: main} gives $2N$ queries; we close by showing this is essentially optimal for any one-bit reaction oracle.
\begin{thm}\label{thm: lower-bound}
In the oracle-only model where the public material is treated as information-theoretically independent of $\vs$, any algorithm that recovers every uniform $\vs\in\{-1,0,1\}^N$ from $\mathcal{O}_{\mathrm{react}}$ makes at least $\lceil N\log_2 3\rceil$ queries in the worst case. For the sparse-ternary distribution with $\Pr[s_i\ne 0]=p$, where nonzero coefficients have uniform signs, the average-case lower
bound on the number of binary oracle answers is
\[
H_p(\vs) = N\bigl(p\log_2(2/p)+(1-p)\log_2\tfrac{1}{1-p}\bigr)\enspace.
\]
\end{thm}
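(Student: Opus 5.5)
The plan is a standard information-theoretic counting argument, carried out in the oracle-only model. There the public material is information-theoretically independent of $\vs$, so the only information about $\vs$ available to the algorithm is the sequence of oracle bits. I would first model an arbitrary (possibly adaptive, possibly randomized) algorithm making at most $t$ queries to $\mathcal{O}_{\mathrm{react}}$ as a binary decision tree of depth $t$. Each internal node is a query ciphertext, chosen as a function of the previous answers and of the public material. Each answer is one bit, \texttt{valid} or \texttt{invalid}, and each leaf is labeled with an output guess for $\vs$.

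Randomness is handled by fixing the algorithm's coins and the public material. Since the public material is independent of $\vs$, this fixing does not change the conditional distribution of $\vs$. If the algorithm recovers every $\vs$ correctly, then for every such fixing the map from $\vs$ to leaves must be injective on the support of the secret distribution.

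\textbf{Worst case.} A depth-$t$ binary tree has at most $2^t$ leaves, and injectivity on $\{-1,0,1\}^N$ forces $2^t\ge 3^N$. Hence $t\ge N\log_2 3$, and since $t$ is an integer, $t\ge\lceil N\log_2 3\rceil$.

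\textbf{Average case for the sparse distribution.} I would invoke the source-coding (Kraft) inequality. The transcript of answers up to a leaf is a prefix-free binary codeword, because in a binary tree no leaf's path is a prefix of another's. Correct recovery means $\vs$ is a deterministic function of this codeword, given the fixed coins and public material. Shannon's noiseless coding bound then gives
\[
\mathbb{E}[\#\text{queries}]\ \ge\ H(\vs)\enspace.
\]
Averaging over the coins preserves this bound, since conditioning on independent randomness does not change $H(\vs)$.

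It remains to compute $H(\vs)$. The coordinates are i.i.d. with law $\Pr[0]=1-p$ and $\Pr[\pm1]=p/2$, so
\[
H(\vs)=N\Bigl(p\log_2\tfrac{2}{p}+(1-p)\log_2\tfrac{1}{1-p}\Bigr)=H_p(\vs)\enspace.
\]

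The main obstacle is not the counting itself but rigorously justifying the reduction to a deterministic decision tree on $\vs$ alone. Concretely, the oracle's answer on a ciphertext may depend on the client's decryption of that ciphertext, which depends only on $\vs$ and the ciphertext. So once the public material and coins are fixed, each answer is a deterministic function of $\vs$ and the earlier answers. The independence assumption in the statement is exactly what makes this valid; without it, public keys would leak computational information about $\vs$. I would state this explicitly and note that the bound is therefore information-theoretic and does not contradict the attack's use of $\pk$.
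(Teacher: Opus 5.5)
Your proposal is correct and follows the paper's proof: for the worst case you count at most $2^T$ transcripts against $3^N$ secrets, and for the average case you apply Shannon's source-coding bound to the prefix-free transcript code, which gives $H_p(\vs)$. Your decision-tree and Kraft formalization, together with fixing the coins and the independent public material, makes explicit what the paper states in two lines.
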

\begin{proof}
A transcript of $T$ adaptive binary oracle answers takes at most $2^T$ distinct values. A zero-error algorithm must induce distinct transcripts on distinct secrets, so $2^T\ge 3^N$ for uniform ternary secrets, giving $T\ge N\log_2 3$ in the worst case. For the Bernoulli-sparse ternary distribution,  each coefficient is zero
with probability \(1-p\), and is \(+1\) or \(-1\) with probability
\(p/2\) each. %
Shannon's source-coding bound gives the average-case lower bound $\mathbb{E}[T]\ge H_p(\vs)$ with $H_p$ defined as above.
\end{proof}

Together with \Cref{thm: main}, the query complexity of \Cref{alg:on-attack} remains within a factor of $2/\log_2 3 \approx 1.262$ of this information-theoretic lower bound. Note that this abstraction inherently ignores the computational information that the RLWE public key $\pk$ carries regarding $\vs$; thus, we claim near-optimality strictly with respect to the number of one-bit oracle responses, rather than against an unbounded adversary that additionally exploits $\pk$. Concretely, the default Lattigo~\cite{lattigo} configuration with a uniform ternary secrets ($p=2/3$) has $H_p(\vs) \approx 1.585N$. In this regime, the empirical query count of $\approx 1.667N$ observed in \Cref{sec:experiment} is within a factor of $1.05$, demonstrating the near-optimality of our attack in practice.
}

\begin{rmk}\label{rmk: bounded-secret}
The same idea extends to secrets with coefficients in
\(\{-S,\ldots,S\}\).  Instead of testing only whether a coefficient is
zero, the adversary can perform interval tests.  To test whether
\(s_j\) lies within distance \(w\tau\) of a center \(c\), it uses a
one-hot mask of amplitude \(\alpha=1/w\) and subtracts an encrypted
one-hot offset encoding \(c\) in the targeted slot.  The decoded target
slot then has the form
\[
    (s_j-c)/w+\widetilde\delta\enspace.
\]
Under the corresponding slot-margin conditions, the reaction oracle
tests whether \(|s_j-c|\le w\tau\).  Applying bisection over
\(\{-S,\ldots,S\}\) recovers each coefficient in
\(\lceil\log_2(2S+1)\rceil\) queries, hence \(O(N\log S)\) queries in
total.
\end{rmk}

\begin{rmk}[Composition with an honest output]\label{rmk: composition}
The Phase-1 and Phase-2 query ciphertexts may instead be added to a ciphertext obtained from an honest evaluation of a circuit that the secret key owner expects. The only change is that the isolated target slot now reads $m_k+\alpha s_{k,\beta}$ rather than $\alpha s_{k,\beta}$; it stays admissible when $s_{k,\beta}=0$ and overflows otherwise, so the reaction still isolates the target while riding on a legitimate result.
\end{rmk}

\subsection{Sparse Secrets and CKKS Variants}
\label{sec: applicability}

The attack described in \Cref{alg:on-attack} can be generalized in two directions: 1) we can further reduce the query complexity for CKKS with sparse
ternary secrets, and 2) we can target CKKS variants whenever they provide a public, normalized \(\mathsf{C2S}_\Delta\) layer that meets the necessary noise margins. 

\subsubsection{Sublinear Queries for Sparse Secrets}
The baseline $2N$-query bound allocates one oracle query per coordinate to recover the support $\calS$. 
However, when the secret key is sparse, almost all of its uncertainty lies in the support, which can be optimized with far fewer queries using a combinatorial \emph{Group Testing} strategy. 
Since the reaction oracle thresholds the {maximum} slot-norm magnitude, 
the individual one-hot mask ($\pi^{(k, \alpha)}$) can be naturally generalized to a \emph{subset mask}:
\[
\boldsymbol{\pi}^{(T,\alpha)} := \mathsf{Encode}\bigg( \alpha\sum_{k\in T}\mathbf{e}_k, \Delta \bigg)\enspace,
\]
over any set of slots $T$. 

The resulting reaction query triggers an \texttt{invalid} response if and only if at least one coordinate within $T$ intersects with the secret support. 
Crucially, because the oracle evaluates the maximum magnitude across distinct slots, the individual $\pm1$ signs of the secret coefficients occupy isolated positions and cannot cancel each other out. 
Each query, therefore, effectively functions as a single group test, asking \emph{``Does the subset $T$ intersect the secret support $\mathcal{S}$?''} 
Consequently, an adaptive bisection procedure can identify all $h = |\mathcal{S}|$ non-zero coordinates using only $O(h\log_2(N/h))$ queries.

Once the support $\mathcal{S}$ is fully recovered, the corresponding signs can be extracted directly from the public key without any further oracle queries. 
Restricting the decryption relation $\vb = \va\vs + \ve \pmod q$ to the support $\mathcal{S}$ leaves an over-determined, $h$-dimensional LWE instance ($N$ samples in $h \ll N$ unknowns), where the short secret is the sign vector $\vs\vert_{\mathcal{S}} \in \{-1, +1\}^h$. 
The concrete hardness of this reduced instance is evaluated empirically in \Cref{sec:experiment}.

The full algorithm is specified in \Cref{alg:algo_sparse}, where the recursive binary-splitting procedure is formalized as $\mathsf{BisecGroupTest}$ in \Cref{alg:subroutine_hwang}. The lattice reduction ${\sf LatticeReduction}_q(\mathbf{A}, \vb)$ finds a short vector $\vs$ satisfying $\mathbf{A}\vs \approx \vb \bmod q$. 
Since the LWE dimension collapses from $N$ to the sparse Hamming weight $h \ll N$ while $q$ remains large, this reduced instance can be solved efficiently using standard lattice reduction algorithms (e.g., LLL or BKZ).
Note that an adversary can construct the modular relation from any $\ct$, $\pk$, or ${\sf rotk}$; selecting a key with the largest available modulus $q = PQ$ will make the reduction cost optimal.

\begin{algorithm}[t]
\caption{Reaction Attack on Sparse-secret CKKS.}
\label{alg:algo_sparse}
\SetAlgoLined
\SetAlgoNoEnd
\DontPrintSemicolon
\KwIn{Reaction oracle $\mathcal{O}_{\mathrm{react}}$ with threshold $\tau$, public material $(\pk=(\va, \vb),\{\mathsf{rotk}_g\})$, modulus $q$ of \pk, parameters $(N,q_0,\Delta)$, and amplitude $\alpha$ satisfying \Cref{thm: sparse}.}
\KwOut{Sparse ternary secret $\vs=(s_0,\ldots,s_{N-1})$.}

$\mathsf{ct}^{(1)} \gets (\va-\Delta,\,\vb)$\tcp*{Offline shift}
$(\mathsf{ct}_R^{(2)},\mathsf{ct}_I^{(2)}) \gets \mathsf{C2S}_\Delta(\mathsf{ct}^{(1)})$\tcp*{Offline C2S}
$\mathcal{S}\gets\emptyset$, $\vs\gets \mathbf 0$\;

\tcc{Phase~1: Support recovery.}
\ForEach{$\beta\in\{R,I\}$}{
    $\mathcal{S}_\beta \gets \mathsf{BisecGroupTest}\left([\frac{N}{2}-1], \, \mathsf{ct}_\beta^{(2)}\right)$\;
    $\mathcal{S} \gets \mathcal{S} \cup \mathcal{S}_\beta$\;
}

\tcc{Phase~2: Sign recovery.}
Construct $\mathbf{M}_{\va} \in \mathbb{Z}_q^{N \times N}$ from $\va$\;
$\vs\vert_{\mathcal{S}} \gets \mathsf{LatticeReduction}_{q}\left(\mathbf{M}_{\va}\vert_{\mathcal{S}}, \, \vb\right)$\;
$\vs\vert_{\calS^c} \gets 0$\;
\Return{$\vs$}
\end{algorithm}

\begin{algorithm}[t]
\caption{Subroutine $\mathsf{BisecGroupTest}$.}
\label{alg:subroutine_hwang}
\SetAlgoLined
\SetAlgoNoEnd
\DontPrintSemicolon
\KwIn{Target $T \subseteq [\frac{N}{2}-1]$, and a ciphertext $\mathsf{ct}^{(2)}$.}
\KwOut{Local support indices $\mathcal{S}_{\mathrm{local}}$.}
\KwData{Global Parameters: $\mathcal{O}_{\mathrm{react}}$, $\pk$, $\Delta$, $\alpha$}

$\mathcal{S}_{\mathrm{local}} \gets \emptyset$\;
$\boldsymbol{\pi}^{(T,\alpha)} \gets \mathsf{Encode}\left( \alpha\sum_{k\in T}\mathbf{e}_k, \Delta \right)$ %
Sample fresh $\mathsf{ct}^{(0)}\gets\mathsf{Enc}_{\pk}(0)$\;
$\mathsf{ct} \gets \mathsf{ct}^{(0)}+\mathsf{ct}^{(2)} \cdot \boldsymbol{\pi}^{(T,\alpha)}$\;

\BlankLine
\If{$\mathcal{O}_{\mathrm{react}}(\mathsf{ct}) == \texttt{invalid}$}{
    \If{$|T| == 1$}{
        $k \gets \text{the single element in } T$\;
        \If{$\mathsf{ct}^{(2)} == \mathsf{ct}_R^{(2)}$}{
            $\mathcal{S}_{\mathrm{local}} \gets \mathcal{S}_{\mathrm{local}} \cup \{k\}$\;
        }\Else{
            $\mathcal{S}_{\mathrm{local}} \gets \mathcal{S}_{\mathrm{local}} \cup \{k+ \frac{N}{2}\}$\;
        }
    }\Else{
        Partition $T:=T_1\sqcup T_2$, s.t. $|T_1| = \lfloor |T|/2 \rfloor$\;
        $\mathcal{S}_{\mathrm{local}} \gets \mathcal{S}_{\mathrm{local}} \cup \mathsf{BisecGroupTest}\left(T_1, \mathsf{ct}^{(2)}\right)$\;
        $\mathcal{S}_{\mathrm{local}} \gets \mathcal{S}_{\mathrm{local}} \cup \mathsf{BisecGroupTest}\left(T_2, \mathsf{ct}^{(2)}\right)$\;
    }
}
\Return{$\mathcal{S}_{\mathrm{local}}$}
\end{algorithm}

\Cref{thm: sparse} then proves the correctness of \Cref{alg:algo_sparse}.

\begin{thm}\label{thm: sparse}
Let $\vs$ be a ternary secret with Hamming weight $h < N$, and assume the Phase-1 slot-margin of \Cref{thm: main} together with the multi-slot modulus condition $\alpha\kappa_{\max}h+B_{\mathrm{coeff}}<q_0/2$. 
Then, the group testing in \Cref{alg:algo_sparse} recovers the support $\mathcal{S}$ using $O(h\log_2(N/h))$ reaction queries. The sign vector $\vs\vert_{\mathcal{S}}$ is subsequently determined by solving the restricted $h$-dimensional LWE instance, requiring no additional oracle queries.
\end{thm}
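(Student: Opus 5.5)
The proof has three parts: show that each subset-mask query is a correct group test, count the queries made by $\mathsf{BisecGroupTest}$, and justify the sign recovery from the public key.

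\emph{Group-test correctness.} Fix $\beta\in\{R,I\}$ and a subset $T$ of slots. As in the proof of \Cref{thm: main}, the query
\[
\mathsf{ct}^{(0)}+\mathsf{ct}^{(2)}_\beta\cdot\boldsymbol\pi^{(T,\alpha)}
\]
decodes to
\[
\alpha\sum_{k\in T}s_{k,\beta}\boldsymbol e_k+\widetilde\delta_T ,
\]
where $\widetilde\delta_T$ is the slot-domain noise. We need $\|\widetilde\delta_T\|_\infty^{\mathrm{slot}}\le\eta_{\sup}$. The fresh-encryption and C2S contributions are unchanged from the one-hot case. The only new point is that the plaintext-multiplication noise must still be bounded by $B_{\mathsf{pmult}}^{\mathrm{slot}}$ for a multi-slot mask. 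We either assume this, or note that slot-wise multiplication keeps the inherited noise per slot at $\alpha\nu_\alpha$.

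The multi-slot modulus condition $\alpha\kappa_{\max}h+B_{\mathrm{coeff}}<q_0/2$ rules out modular wrap. Only the at most $h$ slots where $s_{k,\beta}\neq0$ carry signal. Each such slot contributes at most $\alpha\kappa_{\max}$ in the coefficient domain, so the triangle inequality bounds the signal part by $\alpha\kappa_{\max}h$.

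With no wrap, the slot-margin \Cref{eq: alpha-margin} gives the two cases:
\begin{itemize}
\item if $T\cap\mathcal S_\beta=\emptyset$, every slot has magnitude at most $\eta_{\sup}<\tau$, so the oracle answers \texttt{valid};
\item otherwise, some slot has magnitude at least $\alpha-\eta_{\sup}>\tau$, so it answers \texttt{invalid}.
\end{itemize}
No sign cancellation can occur, because distinct coefficients occupy distinct slots and the oracle takes a max. The oracle is therefore an exact group test, and a union bound over all queries (polynomially many) keeps the failure probability negligible.

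\emph{Query count.} $\mathsf{BisecGroupTest}$ is a recursion on a binary tree of depth $\lceil\log_2(N/2)\rceil$, and it recovers exactly $\mathcal S_\beta$. Only nodes whose sets intersect the support are expanded, and each expanded node issues queries for its two children.

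At depth $d$ there are at most $\min(2^d,h_\beta)$ positive nodes, where $h_\beta=|\mathcal S_\beta|$. Summing over levels:
\begin{itemize}
\item levels $d\le\log_2 h_\beta$ contribute $O(h_\beta)$ queries in total;
\item each of the remaining $\log_2(N/h_\beta)$ levels contributes at most $2h_\beta$ queries.
\end{itemize}
This gives $O(h_\beta\log(N/h_\beta)+h_\beta)$ queries for each $\beta$. Adding the two values of $\beta$ and using concavity of $x\log(N/x)$ yields $O(h\log_2(N/h))$. The additive $O(h)$ is absorbed when $h\le N/2$; if $h$ is close to $N$, we simply cap the count by the $N$-query linear scan, which is $O(h)$.

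\emph{Signs.} Given $\mathcal S$, the relation $\vb=\mathbf M_{\va}\vs+\ve \bmod q$ becomes $\vb=\mathbf M_{\va}|_{\mathcal S}\,\vs|_{\mathcal S}+\ve$. This is an LWE instance in dimension $h$ with $N$ samples and secret in $\{\pm1\}^h$, and no further oracle queries are needed.

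I expect this last step to be the main obstacle. The statement only asserts that the signs are "determined" by the restricted instance, so I would prove uniqueness: with $N\gg h$ samples, no other sign vector gives an error of comparable size, with overwhelming probability over the uniform $\va$. This is a counting argument over the $2^h$ candidates. That the instance is actually solved efficiently would be deferred to the empirical lattice-reduction results in \Cref{sec:experiment}, not claimed as a polynomial-time guarantee.
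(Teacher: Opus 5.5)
Your proof is correct for $h\le N/2$, which is the regime the theorem is really about. It has the same three-part structure as the paper. First, the subset mask turns each query into the exact group test $\mathbf{1}[T\cap\mathcal{S}\neq\emptyset]$, using the same noise, no-cancellation and triangle-inequality wrap arguments; the $|T|$-invariance of the mask noise is assumed by you and simply asserted by the paper. Second, a query count. Third, the restricted relation $\vb=\mathbf{M}_{\va}|_{\mathcal{S}}\vs|_{\mathcal{S}}+\ve$.

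Two sub-arguments genuinely differ from the paper's.

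\emph{Query count.} The paper cites Hwang's generalized binary splitting and its $O(h_\beta\log_2(M/h_\beta))$ guarantee, then applies concavity of $t\mapsto t\log_2(M/t)$. Your level-by-level count of positive nodes instead analyzes the plain halving recursion that $\mathsf{BisecGroupTest}$ actually implements, which is not literally Hwang's procedure. Your argument is therefore self-contained and matches the pseudocode. The cost is an explicit additive $O(h)$, because your concavity step with $N$ in place of $M$ gives $h\log_2(2N/h)$.

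\emph{Signs.} The paper only asserts that the instance can be solved by standard reductions. Your union bound over the $2^h$ sign candidates actually proves that the signs are uniquely determined. When you write it out, note that $\va(\vs-\vs')$ is uniform only on the ideal generated by $\vs-\vs'$, which may be a zero divisor in $\calR_q$. The index of that ideal in $\calR_q$ is at most the algebraic norm of $(\vs-\vs')/2$, hence at most $h^{N/2}$, which is still small enough for the union bound to go through.

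One step does not work: your fix for $h$ close to $N$.
\begin{itemize}
\item For $h=N-1$, every internal node of both bisection trees contains a support index. $\mathsf{BisecGroupTest}$ therefore queries every node, roughly $2N$ queries.
\item Meanwhile $h\log_2(N/h)\le 1/\ln 2<2$. An $O(h)$ cap is therefore not $O(h\log_2(N/h))$ in this regime.
\item \Cref{alg:algo_sparse} also has no linear-scan fallback to cap against.
\end{itemize}
The claimed bound itself fails for $h$ near $N$. The correct uniform statement is $O(h\log_2(2N/h))$, or equivalently $O(h\log_2(N/h))$ under $h\le cN$ for a constant $c<1$. The paper's proof has the same hole, hidden in its per-half bound $h_\beta\log_2(M/h_\beta)$, which vanishes at $h_\beta=M$.
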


\begin{proof}[Proof of \Cref{thm: sparse}]
Recall the Phase-1 slot-margin condition $\eta_{\sup} < \tau < \alpha - \eta_{\sup}$ from \Cref{eq: alpha-margin}.

\medskip\noindent\emph{A subset query for group testing.}
Fix $\beta \in \{R, I\}$ and a slot subset $T \subseteq [N/2]$. 
We construct the subset mask $\boldsymbol{\pi}^{(T,\alpha)}$
and compute $\mathsf{ct} = \mathsf{ct}^{(0)} + \mathsf{ct}_\beta^{(2)} \cdot \boldsymbol{\pi}^{(T,\alpha)}$, where $\mathsf{ct}^{(0)} \gets \mathsf{Enc}_\pk(0)$. 
By the coordinate-wise linearity, the slot vector becomes
\[
    \alpha\sum_{k\in T} s_{j_k}\mathbf{e}_k + \widetilde{\boldsymbol{\delta}} \qquad \text{for some } \|\widetilde{\boldsymbol{\delta}}\|_\infty^{\mathrm{slot}} \le \eta_{\sup}\enspace,
\]
where $j_k = k$ for $\beta = R$ and $j_k= k + N/2$ for $\beta = I$. 
The noise bounds $B_{\mathsf{pmult}}^{\mathrm{slot}}$ and $B_{\mathsf{fresh}}^{\mathrm{slot}}$ scale with $\alpha$ and the ring parameters, remaining invariant to the cardinality $|T|$. 

Thus, if $T \cap \mathcal{S} = \emptyset$, then every $s_{j_k} = 0$, bounding the decoded magnitude by $\eta_{\sup} < \tau$ (\texttt{valid}). 
Else if $T \cap \mathcal{S} \neq \emptyset$, there exists at least one slot $k \in T$ with $s_{j_k} = \pm1$, yielding a magnitude of at least $\alpha - \eta_{\sup} > \tau$, triggering an \texttt{invalid} response. 
The query output thus perfectly instantiates the group-testing predicate $\mathbf{1}_{[\,T \,\cap \,\mathcal{S} \,\neq\, \emptyset\,]}$.

For the coefficient-domain modulus constraint, linearity and the triangle inequality bound the $\ell_\infty$-norm of the multi-slot signal by $\alpha\kappa_{\max} \cdot |T \cap \mathcal{S}| \le \alpha\kappa_{\max}h$. Modular wrap-around is securely avoided under the condition $\alpha\kappa_{\max}h + B_{\mathrm{coeff}} < q_0/2$.

\medskip\noindent\emph{Support recovery via repeated group testing.}
We run group testing on each half, independently using $\mathsf{ct}_\beta^{(2)}$. Let $h_R = |\calS \cap [0, M)|$ and $h_I = |\calS \cap [M, N)|$, where $h_R + h_I = h$ and $M = N/2$. 
Hwang's generalized binary-splitting algorithm~\cite{DuHwang00,Hwang72} isolates all $h_\beta$ non-zero indices within $\le O(h_\beta \log_2(M / h_\beta))$ adaptive queries, which is order-optimal against the information-theoretic bound $\lceil \log_2 \binom{M}{h_\beta} \rceil$ for $\beta \in \{R, I\}$. 
By the concavity of $t \mapsto t \log_2(M/t)$, the total query complexity is bounded by:
\[
\sum_{\beta \in \{R, I\}} h_\beta \log_2(M / h_\beta) \le h \log_2(2M / h) = h \log_2(N/h)\enspace.
\]
Hence, the entire support $\mathcal{S}$ is recovered in $O(h\log_2(N/h))$ reaction queries.

\smallskip
\noindent\emph{Sign extraction via lattice reduction.}
When represented in coefficient vectors, the public key elements satisfy $\vb = \mathbf{M}_{\va}\vs + \ve \pmod q$, where $\mathbf{M}_{\va} \in \mathbb{Z}_q^{N \times N}$ is the negacyclic matrix of $\va$. Since $\vs$ is supported strictly on $\mathcal{S}$, the relation restricts to the $h=|\calS|$ columns of $\mathbf{M_a}$:
\[
\vb = \mathbf{M_a}\vert_{\mathcal{S}} \cdot \vs\vert_{\mathcal{S}} + \ve \pmod q \enspace,
\]
for $\mathbf{M_a}\vert_{\mathcal{S}} \in \mathbb{Z}_q^{N \times h}$ and $\vs\vert_{\mathcal{S}} \in \{-1, +1\}^h$. 
This over-determined $h$-dimensional LWE instance can be solved via standard reductions, with no additional oracle access.
\end{proof}

Note that given $\kappa_{\max} \le \Delta / \sqrt{N/2}$, the multi-slot modulus constraint is highly permissive and easily satisfied in typical sparse secret deployments where $h \ll N$. Furthermore, while \Cref{thm: lower-bound} establishes a rigid reaction-query lower bound based purely on oracle feedback, this variant actively integrates $\pk$. Consequently, the total query complexity may drop below the $H_p(\vs)$ reaction-query floor whenever the $h$-dimensional LWE instance is efficiently solvable.

\subsubsection{Conditional Applicability to CKKS Variants}
Our attack framework extends beyond plain CKKS to architectural variants such as bootstrappable~\cite{BMTH21,CHKKS18, HS18}, rescaling-free~\cite{KDE+23, MHWW24}, and discrete-message~\cite{CHKS25, GV25,LW24} extensions. 
Any variant exposing a public, normalized $\mathsf{C2S}_\Delta$ layer with slot-wise plaintext multiplication is inherently vulnerable; its precise viability depends on the specific variant's embedding error, scale management, and decryption noise, as analyzed in Corollary
~\ref{cor: variants}. 

\begin{cor}\label{cor: variants}
Let $\Pi'$ be a CKKS variant whose decryption equation satisfies $\langle\mathsf{ct},\sk\rangle \bmod q = \vm + \mathbf{e} + \mathbf{e}_{\mathrm{aux}}$ with $\|\mathbf{e}_{\mathrm{aux}}\|_\infty \le B_{\mathsf{aux}}$. Define the augmented noise margins:
\begin{align*}
\eta_{\sup}'    &:= \alpha\bigl((B_{\mathsf{shift}}^{\mathrm{coeff}} + B_{\mathsf{aux}})/\Delta + B_{\mathsf{C2S}}^{\mathrm{slot}}\bigr) + B_{\mathsf{pmult}}^{\mathrm{slot}} + B_{\mathsf{fresh}}^{\mathrm{slot}}\enspace, \\
\eta_{\mathrm{sign}}' &:= \eta_{\sup}' + B_{\mathsf{off}}^{\mathrm{slot}}\enspace.
\end{align*}
If $\Pi'$ satisfies the augmented slot-margin conditions:
\[
\eta_{\mathrm{sign}}' < \tau\enspace, \quad \alpha - \eta_{\sup}' > \tau\enspace, \quad \text{and }\; 2\alpha - \eta_{\mathrm{sign}}' > \tau\enspace,
\]
then \Cref{alg:on-attack} still extracts the secret key $\vs$ within $2N$ queries.
\end{cor}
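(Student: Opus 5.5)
The plan is to reduce \Cref{cor: variants} to \Cref{thm: main} by folding the auxiliary term into the offline shift noise. We track $\mathbf{e}_{\mathrm{aux}}$ through the pipeline of \Cref{alg:on-attack} and check that every step of the proof of \Cref{thm: main} only uses noise bounds, not the exact form of the decryption equation.

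\textbf{Offline step.} Under $\Pi'$, the shifted key $\mathsf{ct}^{(1)}=(\va-\Delta,\vb)$ decrypts to $\Delta\vs+\ve+\mathbf{e}_{\mathrm{aux}}$. The non-signal coefficient term is therefore bounded by $B_{\mathsf{shift}}^{\mathrm{coeff}}+B_{\mathsf{aux}}$ instead of $B_{\mathsf{shift}}^{\mathrm{coeff}}$. Applying the public normalized $\mathsf{C2S}_\Delta$, which is linear up to its own additive slot noise $B_{\mathsf{C2S}}^{\mathrm{slot}}$, gives slots equal to $s_{k,\beta}$ plus noise at most $(B_{\mathsf{shift}}^{\mathrm{coeff}}+B_{\mathsf{aux}})/\Delta+B_{\mathsf{C2S}}^{\mathrm{slot}}$. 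This is $\nu_\alpha$ with $B_{\mathsf{shift}}$ replaced by $B_{\mathsf{shift}}+B_{\mathsf{aux}}$.

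\textbf{Online steps.} Multiplying by the one-hot mask $\boldsymbol\pi^{(k,\alpha)}$ scales this by $\alpha$ and adds $B_{\mathsf{pmult}}^{\mathrm{slot}}$. The fresh $\mathsf{Enc}_{\pk}(0)$ adds $B_{\mathsf{fresh}}^{\mathrm{slot}}$, and in Phase~2 the encrypted offset adds $B_{\mathsf{off}}^{\mathrm{slot}}$. These sum exactly to $\eta_{\sup}'$ and $\eta_{\mathrm{sign}}'$.

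The one point that needs care is that the fresh and offset encryptions in $\Pi'$ may themselves carry an $\mathbf{e}_{\mathrm{aux}}$ term. I will assume, as the statement implicitly does, that $B_{\mathsf{fresh}}^{\mathrm{slot}}$ and $B_{\mathsf{off}}^{\mathrm{slot}}$ denote the variant's own bounds, so these contributions are already included.

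\textbf{Applying \Cref{thm: main}.} Since $\eta_{\sup}'\le\eta_{\mathrm{sign}}'<\tau$, the given conditions are precisely \Cref{eq: alpha-margin} with primed quantities. The Phase~1 and Phase~2 case analysis of \Cref{thm: main} then applies verbatim:
\begin{itemize}
\item if $s_{k,\beta}=0$, every slot has magnitude at most $\eta_{\sup}'<\tau$, so the query is accepted;
\item if $|s_{k,\beta}|=1$, the target slot has magnitude at least $\alpha-\eta_{\sup}'>\tau$, so the query is rejected;
\item in Phase~2, $s_{k,\beta}=+1$ gives a slot of magnitude at most $\eta_{\mathrm{sign}}'<\tau$, while $s_{k,\beta}=-1$ gives at least $2\alpha-\eta_{\mathrm{sign}}'>\tau$.
\end{itemize}
The query count is $N+|\mathcal{S}|\le 2N$ exactly as before, and the union bound over $2N$ queries carries over unchanged.

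\textbf{Main obstacle.} The corollary does not restate the no-modular-wrap condition \Cref{eq: modulus}. I would either read it as inherited from \Cref{thm: main} with $B_{\mathrm{coeff}}$ enlarged by the propagated $B_{\mathsf{aux}}$, or argue it follows because the variant's correctness already keeps the residue within $q_0/2$. I expect this justification, together with confirming that $\mathsf{C2S}_\Delta$ in the variant behaves linearly on the extra error, to be the main point needing care. I would state it as an explicit standing assumption rather than derive it.
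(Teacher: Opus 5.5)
Your proposal is correct and follows the route the paper intends. The paper gives no separate proof of \Cref{cor: variants}; it treats the corollary as the proof of \Cref{thm: main} rerun with $B_{\mathsf{shift}}^{\mathrm{coeff}}$ enlarged to $B_{\mathsf{shift}}^{\mathrm{coeff}}+B_{\mathsf{aux}}$, with $\eta_{\sup}'<\tau$ following from $\eta_{\mathrm{sign}}'<\tau$, exactly as you argue. You are also right that the no-wrap condition \Cref{eq: modulus} is missing from the statement, and carrying it over from \Cref{thm: main} as an explicit hypothesis (with $B_{\mathrm{coeff}}$ absorbing $B_{\mathsf{aux}}$) is the correct way to close that gap.
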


Note that variants utilizing IND-CPA\({}^{\mathrm{D}}\)-resistant noise flooding~\cite{LM21, LMSS22} may also conform to this template if the flooding parameters remain below these strict margins; 
a full analysis under randomized-oracle flooding (where the reaction bit becomes probabilistic) is left for future work.

\section{Experimental Validation}\label{sec:experiment}
We experimentally evaluate the attacks presented in \Cref{alg:on-attack} and~\ref{alg:algo_sparse} using 
Lattigo~v6.2.0~\cite{lattigo} across various parameter settings.

The implemented attack 
follows the steps of \Cref{alg:on-attack} and~\ref{alg:algo_sparse}.
The oracle decodes the resulting plaintext and returns \texttt{valid} if and only if $\|\mathsf{slots}\|_\infty\le 1$, matching the slot-domain reaction oracle of \Cref{sec: reaction-oracle}. 
The per-query plaintext multiplication does not consume the relinearization key, so the attack needs no $\mathsf{rlk}$ access. 

All experiments were executed using a single-thread
on an Intel Core i9-10940X {at $3.30$\,GHz, with $128$\,GB of RAM, under Ubuntu~20.04.6 LTS}.

\medskip\noindent\textbf{Parameters.}
We sweep five parameter triples $(\log N, \log q_0$, $\log \Delta)$ based on commonly used CKKS parameter regimes: $({12},35,30)$, $({12},37,32)$, $({14},45,40)$, $({16},55,50)$, and $({16},60,58)$, spanning low- to high-precision settings. 
They share ring dimensions and level-zero moduli with \Cref{tab: number of queries}, with varying $\log\Delta$ to examine the margin. 

For the secret key distribution, we primarily focus on the two standard configurations implemented in the Lattigo library: 
(i)~the uniform ternary distribution \texttt{Ternary(P:\,2/3)}, where each coefficient is non-zero with probability $2/3$ and takes a uniform sign, 
and (ii)~the fixed-weight sparse ternary distribution \texttt{Ternary(H:\,$h$)}, where the number of non-zero coefficients is fixed to a constant $h \ll N$. 
To demonstrate the comprehensive reach of our attack, we further extend it to additional distributions, such as Bernoulli-sparse ternary distributions (\texttt{Ternary(P:\,}$\cdot$\texttt{\,)}) and the discrete Gaussian distribution (\texttt{DiscreteGaussian(Sigma:\,$\cdot$\,, Bound:\,$\cdot\,$)}).

For these parameter regimes, we employ Lattigo’s six-prime modulus chains.
The {C2S} layer consumes four levels, leaving an output modulus of at least $q_0$, where we evaluate the modulus condition \Cref{eq: modulus}.
The mask multiplication applies plaintext multiplication without rescaling, so decoding by the ciphertext's own scale preserves the slot values.
The maximum modulus bit is then $\log Q \approx \log q_0 + 50\times5\approx285$--$305$, and $\log PQ \lesssim 427$ for most parameter sets, except for one parameter set at $\log N=16$ with $\log PQ\approx472$. The full configurations are provided in the artifact.

Although our attack is independent of the underlying RLWE hardness or the IND-CPA security of the scheme, the evaluated parameters for $\log N \in \{14, 16\}$ with the uniform ternary secrets conform to the standard 128-bit security level per the HE security guidelines~\cite{BCC+24}.

\begin{table*}[h!]
\centering
\small
\begin{tabular}{c|c|c|c|c|c|c|c}
\toprule
$\log N$ & $\log q_0$ & $\log\Delta$ & trials & $|\mathcal{S}|$ (mean$\pm$std, $|\mathcal{S}|/N$) & $\#Q$ (mean$\pm$std) & $\#Q/H_{2/3}(\vs)$ & wall (sec, mean$\pm$std) \\
\midrule
$12$ & $35$ & $30$ & $50$ & $2{,}736\pm 26$\,\,($0.668$) & $6{,}832\pm 26$ & $1.052$ & $30.02\pm 0.14$ \\
$12$ & $37$ & $32$ & $50$ & $2{,}730\pm 34$\,\,($0.667$) & $6{,}826\pm 34$ & $1.051$ & $30.06\pm 0.19$ \\\midrule
$14$ & $45$ & $40$ & $50$ & $10{,}929\pm 63$\,\,($0.667$) & $27{,}313\pm 63$ & $1.052$ & $506.1\pm 5.1$ \\\midrule
$16$ & $55$ & $50$ & $\phantom{0}5$ & $43{,}696\pm 141$\,\,($0.667$) & $109{,}232\pm 141$ & $1.052$ & $8{,}529\pm 168$ \\ %
$16$ & $60$ & $58$ & $\phantom{0}5$ & $43{,}683\pm 124$\,\,($0.667$) & $109{,}219\pm 124$ & $1.051$ & $68{,}408\pm 76^*$ \\
\bottomrule
\end{tabular}
\caption{End-to-end results of \Cref{alg:on-attack}, against Lattigo~v6.2.0 (with uniform ternary secrets). Secret keys were fully recovered in every trial. 
The first four parameters 
are standard CKKS regimes sharing $(N, q_0)$
with~\cite{lattigo};
the last one 
is in a high-precision setting~\cite{BCC+24}. 
The wall time in the last row (*), was obtained from five concurrent trials; hence, is not directly comparable to other single-trial rows.
$\#Q$ denotes the query-counts, whose theoretical lower bound is $H_{2/3}(\vs)$.}
\label{tab: experiment}
\end{table*}

\medskip\noindent\textbf{Results of \Cref{alg:on-attack}.}
We summarize the attack results targeting CKKS with uniform ternary secrets, in \Cref{tab: experiment}. 
Every trial recovered the secret in full: across all $160$ runs, in a total of $>3{,}000{,}000$ oracle queries, 
every coefficient was recovered exactly, with no false positives or false negatives from the reaction oracle. 
The two $\log N={16}$ regimes were each evaluated on five trials owing to their substantially higher runtime; yet together, they comprise over $1{,}000{,}000$ oracle queries with no error. 
By construction, the query count is $N+|\mathcal{S}|\approx 1.667\,N$.
This is within a factor $1.05$ of the average-case lower bound
$H_{2/3}(\vs)\approx 1.585\,N$ from \Cref{thm: lower-bound}.
End-to-end, the attack completed in $30.0$\,seconds at $\log N={12}$, $8.4$\,minutes at $\log N={14}$, and $142$\,minutes at $(16,55,50)$ on a single thread, with a per-trial standard deviation below $2\%$ of the mean. 
The dominant per-query cost was the plaintext multiplication and decryption.

\medskip\noindent\textit{Robustness across secret distributions.}
To verify that the attack is not tied to the default uniform ternary secret, we sweep additional secret distributions at $(12, 35, 30)$ and $(14, 45, 40)$: ternary with fixed-weights of $h\in\{128,192,256\}$, Bernoulli-sparse ternary with $p\in\{1/3,1/2,2/3,0.9\}$, and a discrete Gaussian with $\sigma=3.2$ truncated to the integer support $[-S,S]$, where $S=\lceil6\sigma\rceil=20$.
The results are summarized in \Cref{tab: distsweep}: 
All $515$ runs recovered the full secret. Each of the $490$ ternary-secret runs used $N+|\calS|<2N$ queries; the $25$ discrete-Gaussian runs used the bounded-secret bisection described in \Cref{rmk: bounded-secret}.

Against the average-case lower bound $H_p(\vs)$ of \Cref{thm: lower-bound}, the Bernoulli-sparse reported a factor of $\le1.07$ for $p\le 2/3$ and is essentially tight at $p=1/2$.
The discrete Gaussian secret is recovered with $\approx 5.4\,N$ queries, close to the support-size benchmark $N\log_2(2S+1)\approx5.36\,N$ for the bounded-secret bisection of Remark~\ref{rmk: bounded-secret}.
The fixed-weight cases show worse ratios against their exact entropy $H(\vs)=\log_2\binom{N}{h}+h$.

\begin{table}[t]
\centering
\small
\begin{tabular}{l|cc|cc}
\toprule
 & \multicolumn{2}{c|}{$\log N={12}$} & \multicolumn{2}{c}{$\log N={14}$} \\
Secret distribution & trials & $\#Q/H(\vs)$ & trials & $\#Q/H(\vs)$ \\
\midrule
ternary, $h=128$ & $50$ & $4.470$ & $20$ & $13.724~\,$ \\
ternary, $h=192$ & $50$ & $3.286$ & $20$ & $9.785$ \\
ternary, $h=256$ & $50$ & $2.666$ & $20$ & $7.728$ \\\midrule
ternary, $p=1/3$ & $50$ & $1.064$ & $20$ & $1.065$ \\
ternary, $p=1/2$ & $50$ & $1.000$ & $20$ & $0.999$ \\
ternary, $p=2/3$ & $50$ & $1.051$ & $20$ & $1.051$ \\
ternary, $p=0.9$ & $50$ & $1.388$ & $20$ & $1.388$ \\\midrule
Gaussian, $\sigma=3.2$ & $20$ & $~\,1.008^*$ & $\phantom{0}5$ & $~\,1.009^*$ \\
\bottomrule
\end{tabular}
\caption{Secret-distribution Robustness Tests at $(12, 35, 30)$ and $(14, 45, 40)$. 
The sweep covers fixed-weight ternary, Bernoulli-sparse ternary, and discrete-Gaussian secrets; every trial fully recovered the secret. 
$\#Q$ denotes the query count. For fixed-weight secrets, $H(\vs)=\log_2\binom{N}{h}+h$ is the exact entropy; for Bernoulli-sparse secrets, $H(\vs)=H_p(\vs)$. For the truncated discrete Gaussian (*), the denominator is the support-size benchmark $N\log_2(2S+1)$ with $S=20$, rather than its Shannon entropy.
}
\label{tab: distsweep}
\end{table}

\medskip\noindent\textit{Empirical noise.}
To empirically validate the noise behavior analyzed in our framework, we measured the exact noise profiles during the C2S layers, confirming that both the slot-margin and modulus conditions hold with sufficient margins.
In the coefficient domain, let $\vm^{(2)}$ denote the plaintext carried by the Step-2 ciphertext. 
The empirical error $\|\vm^{(2)} - \Delta\vs\|_\infty$ remains strictly below $2^{17.3}$ across the entire parameter sweep, safely falling under the conservative analytic bound of $B_{\mathsf{C2S}}^{\mathrm{coeff}}\le 2^{18}$ employed for the modulus condition in \Cref{eq: modulus}.

In the slot domain, the measured noise $B_{\mathsf{C2S}}^{\mathrm{slot}}$ tops out at
$\le 2^{-12.7}$ at $({12},35,30)$ and drops further to, for e.g., $\le 2^{-32.0}$ at $({16},55,50)$. 
Each reported value represents the worst-case $\ell_\infty$ maximum captured across all queried coordinates and experimental trials. 
Under a constant amplitude of $\alpha=4$, the slot-margin scaling term satisfies $\alpha B_{\mathsf{C2S}}^{\mathrm{slot}}\le 2^{-10.7}\ll\tau=1$, even under the lowest precision configuration, and decreases by several additional bits when $\log\Delta\ge 30$. 
This empirically validates 
\Cref{eq: alpha-margin}. 

\begin{table}[t]
\centering\small
\begin{tabular}{cc|cc|c}
\toprule
$\log N$ & $h$ & $\#Q$ & $\#Q/H(\vs)$ & LLL Time (sec.) \\
\midrule
$12$ & $32$  & $338$     & $1.134$ & $0.4$ \\
$12$ & $64$  & $598$     & $1.117$ & $3.3$ \\
$12$ & $128$ & $1{,}079$ & $1.142$ & $1{,}521~~$ \\\midrule
$14$ & $128$ & $1{,}469$ & $1.221$ & $\,1{,}521^*~$ \\\midrule
$16$ & $128$ & $1{,}801$ & $1.234$ & $\,1{,}521^*~$ \\
\bottomrule
\end{tabular}
\caption{End-to-End Results of \Cref{alg:algo_sparse}
(with ternary secrets of fixed-weight $h$).
Every run recovered the exact support; then LLL fully recovered the secret keys.
Single-core LLL time (in seconds, $q=\min q_0$) was measured only at $\log N={12}$, since it only depends on $h$ ($*$). 
$\#Q$ denotes the query count, and $H(\vs)=\log_2\binom{N}{h}+h$ is the exact entropy of a uniformly sampled fixed-weight ternary secret.
}
\label{tab: grouptest}
\end{table}

\medskip
\noindent\textbf{Results of \Cref{alg:algo_sparse}.}
\Cref{tab: grouptest} summarizes the attack results based on \Cref{alg:algo_sparse}.
Across $h\in\{32,64,128\}$ and $\log N \in \{12, 14, 16\}$,
it recovers the exact support, with no false positives or negatives, in $O(h\log(N/h))$ reaction queries: e.g., at $(\log N,h)=({16},128)$, it requires only $1{,}801$ queries, achieving $36\times$ reduction over 
the $N+h=65{,}664$ queries necessitated by 
\Cref{alg:on-attack}. 

The recovered support restricts the LWE instance to a genuine $h$-dimensional LWE problem that can be solved directly from the public key material. Specifically, applying the Lenstra-Lenstra-Lov\'{a}sz (LLL) algorithm~\cite{AC:DucPulSte25,lenstra1982factoring} on a lattice of dimension $2h+1$ with $N$ samples efficiently and exactly recovered every secret sign for $h \in \{32, 64, 128\}$. 
The single-core solving time scales steeply with the Hamming weight, taking $0.4$\,seconds for $h=32$,
$3.3$\,seconds for $h=64$, and 
$25$\,minutes for $h=128$, when $\log q=\min \log q_0 \approx 35$. 

According to the lattice estimator~\cite{APS15}, under the default ternary-secret model, the estimated bit-cost for solving the reduced lattice instances grows from $33.1$ bits at $h=32$ to $43.2$ bits at $h=1{,}024$, for $\log q = 285$ (the minimum
\pk\ modulus $Q$). 
Since we fix the modulus to small $\approx285$ bits,
these are rough upper bounds: for larger moduli of $700$--$1{,}600$ bits and for standard Hamming weights of $h\in\{128,192,256\}$~\cite{BTPH22,BCC+24}, the cost further decreases, far below the standard $128$-bit level.

\section{Countermeasures}\label{sec: discussion}

{%
Reaction leakage can be addressed at several layers. Below, we present two direct and deployment-compatible measures: standardizing the externally visible response and enforcing per-key reaction monitoring. 

\medskip\noindent\textbf{Removing the reaction signal.}
The oracle exists only because the client's externally visible behavior depends on the validity outcome, so the most robust and deployment-compatible mitigation is to remove that dependence: return the same protocol response regardless of the admissibility check, handling any failure internally. This eliminates the one-bit oracle while changing only outward behavior, not the FHE computation. 

\medskip\noindent\textbf{Bounding and monitoring reactions.}
When the response cannot be fully standardized, bounding the number of reactions per secret key is a cheap complementary defense. Key recovery requires resolving enough coordinates that the projected LWE instance falls into lattice-reduction range (e.g., LLL); a sufficiently small cap will leave enough room for security. Even without a hard cap, the attack is detectable: each adversarial trial forces a rejection with high probability, so an elevated per-key rejection rate and query volume provide a natural aggregate detection vector.

}

\medskip
Other than the two major directions, we consider three further mitigation strategies. Below, we discuss why each method---despite seeming straightforward to apply---presents its own limitations, rendering them either ineffective as a standalone defense or detrimental to correctness, performance, or efficiency.

\medskip\noindent\textbf{Noise and Parameter Modifications.}
Violating the slot-margin condition (\Cref{eq: alpha-margin}) via parameter tuning is fragile: because this margin is tied to CKKS correctness, increased noise inherently risks rejecting honest evaluations. Protective randomization must therefore target the observable reaction directly, rather than the homomorphic computation pipeline. Crucially, standard IND-CPA$^\text{D}$ smudging~\cite{GNS+24,LM21, LMSS22} does not transfer to our setting. Smudging typically adds noise $\|\ve_{\mathsf{smdg}}\|_\infty \ll \Delta$ to mask sub-$\Delta$ structures. However, against \Cref{alg:on-attack}, our isolated slot for $s_i \neq 0$ reaches magnitude $\alpha=4$, quadruple the admissible range $\tau=1$. Pulling this manipulated slot below $\tau$ requires smudging noise of magnitude at least $\alpha-\tau=3$, which would indiscriminately reject honest outputs. In short, smudging strong enough to thwart our attack is functionally indistinguishable from smudging that destroys correctness, so parameter changes are impractical for deep or iterative workloads.

\medskip\noindent\textbf{Changing Rings.}
Our attack relies strictly on the encryption-key shift $(\va-\Delta,\vb)$ and a public C2S transform, remaining entirely agnostic to the splitting behavior of $x^N+1$. Migrating from power-of-two cyclotomics to alternative rings~\cite{CLS17,LS15} is therefore not a standalone defense if an analogous transform exists. Furthermore, this approach severely degrades performance by sacrificing highly optimized NTT pipelines.

\medskip\noindent\textbf{Cryptographic Integrity.}
Upgrading to vCCA-secure FHE~\cite{LMSV12, MN24} or verifiable FHE~\cite{ABPS24, BCFK21, FGP14, FNP20, GNS23, GGP10, GKP+13} provides a principled defense, removing the reaction signal by rejecting ciphertexts that are not certified outputs of an intended computation. Recent CKKS-specific verifiable schemes~\cite{CCC+25, ZWL+25} would cleanly reject our crafted ciphertexts, as the attacker cannot prove honest execution of a client-approved circuit. However, these mechanisms introduce a substantial efficiency penalty, making them difficult to integrate into practical deployments.

\medskip
Overall, the standardization of reaction and per-key reaction monitoring remains the most practical mitigation.

\section{Conclusion and Open Questions}\label{sec: conclusion}
We presented a protocol-level reaction attack against CKKS deployments that expose decoded-slot validity responses and C2S rotation keys. Using a simple public-key shift and fresh encryptions of zero, our attack recovers ternary secrets in query complexity that tightly matches theoretical lower bounds.
Crucially, by independently randomizing each query, our approach inherently bypasses simple pattern-matching defenses.

Our results demonstrate that vulnerabilities can emerge strictly at the protocol level, without refuting RLWE or IND-CPA security: secure outsourced computation must tightly control how decoded outputs, including single-bit validity reactions, are exposed. The most practical defense is to standardize the client's externally visible response and monitor reactions per key. The elevated query volume and rejection rate our attack induces further offer a natural detection vector through aggregate failure-rate analysis.

\medskip\noindent\textbf{Open Questions.}
We highlight four directions for future work:
(i) Detection Models: Formalizing detector models to determine which aggregate heuristics (e.g., query volume, rejection rates) provably prevent key recovery; 
(ii) Closing Theoretical Gaps: Reducing the minimal query gap between our theoretical bound and the experimental average for sparse secrets (close to the $\approx 1.05$ gap in uniform ternary); 
(iii) Scheme Generality: Investigating whether similar reaction attacks threaten other widely deployed RLWE schemes, such as BGV and BFV; and 
(iv) Safe Validation: Designing output-validation mechanisms that support legitimate retries without leaking externally exploitable accept/reject bits.

\section*{Acknowledgments}
Jiseung Kim is supported by the National Research Foundation of Korea (NRF) grant funded by the Korean government (MSIT) (No. RS-2026-25477662), and the Institute of Information \& Communications Technology Planning \& Evaluation~(IITP) grant funded by the Korea government(MSIT) (No. RS-2024-00399491, Development of Privacy-Preserving Multiparty Computation Techniques for Secure Multiparty Data Integration).
Changmin Lee is supported by the Korea Internet \& Security Agency (KISA) grant funded by the Korean government (PIPC) (No. 2780000036, Masters and Doctoral Program for Advanced Talent Development in Personal Data Protection and Responsible Use) and by Korea University grant K2603241. Song Bian is supported by 
the National Natural Science Foundation of China (62572020).

\bibliographystyle{IEEEtranS}
\bibliography{paper}

\appendices
\section{Ethics Considerations}
This paper presents a key-recovery attack against a class of CKKS deployments that expose a decoded-output validity reaction together with public evaluation material sufficient for a coefficient-to-slot layer. Because the work develops a new attack technique, it carries misuse risk: an adversary could use the technique to compromise confidentiality in CKKS-based outsourcing protocols that expose such reaction signals.

The risk to real users in our study is limited. All experiments were conducted in controlled local environments using locally generated keys, synthetic data, and local Lattigo instances. We did not target, measure, exploit, or interact with third-party or production systems. We did not collect personal data, use human-subject data, or involve human participants.

The vulnerability we identify is a deployment-surface and protocol-level leakage channel, not a memory-safety or implementation bug in a specific library. The relevant stakeholders include maintainers of FHE libraries such as Lattigo, OpenFHE, and HEaaN, as well as organizations deploying CKKS-based outsourcing protocols whose application surface exposes a per-query validity reaction. We followed coordinated vulnerability disclosure (CVD): we notified the relevant library maintainers and shared the artifact privately before the public release of exploit-ready code. 

The paper also provides mitigation guidance. Primary operational mitigations include standardizing externally visible protocol responses to remove the reaction signal entirely and enforcing per-key reaction caps or monitoring anomalous rejection rates. We also analyze three alternative strategies---parameter modifications (including smudging noise), changing cyclotomic rings, and cryptographic integrity mechanisms (vCCA or verifiable FHE)---and explain their limitations in terms of correctness, performance, or computational overhead.

We believe the defensive benefits of publication outweigh the misuse risk. The attack was validated without affecting real users or systems, and the paper clarifies a concrete deployment risk together with actionable mitigation for CKKS deployments.

\end{document}